\documentclass[11pt]{article}
\usepackage{geometry}
\geometry{
	a4paper,
	total={170mm,250mm},
	left=20mm,
	top=20mm,
}
\usepackage{amsmath, amssymb, amsthm}
\usepackage{epsfig}
\usepackage{graphics}
\usepackage{graphicx}
\usepackage{hyperref}
\usepackage{accents}
\usepackage{epstopdf}
\usepackage{multirow}
\usepackage{subfigure}
\usepackage{color}
\usepackage{subfigure}
\usepackage[utf8]{inputenc}
\usepackage{latexsym}
\newtheorem{theorem}{Theorem}[section]
\newtheorem{remark}[theorem]{Remark}
\usepackage{cite}
\usepackage{authblk}


\usepackage{subfiles}

\usepackage{xr}
\usepackage{cleveref}
\makeatletter
\newcommand*{\addFileDependency}[1]{
\typeout{(#1)}
\@addtofilelist{#1}
\IfFileExists{#1}{}{\typeout{No file #1.}}
}\makeatother

\newcommand*{\myexternaldocument}[1]{%
\externaldocument{#1}%
\addFileDependency{#1.tex}%
\addFileDependency{#1.aux}%
}
\myexternaldocument{supplementary}


\usepackage{textcomp,gensymb}
\usepackage{tabularx} 

\oddsidemargin0mm \topmargin-13mm \textheight235mm\textwidth160mm

\usepackage{authblk} 
\usepackage{orcidlink} 

\begin{document}
   
\title{The silent threat of methane to ecosystems: Insights from mechanistic modelling}


\author{Pranali Roy Chowdhury $^*$ \orcidlink{0000-0002-6048-0718} , Tianxu Wang\orcidlink{0000-0002-1911-4396}, Shohel Ahmed \orcidlink{0000-0002-1639-0950}, Hao Wang  \footnote{Corresponding authors: pranali@ualberta.ca; hao8@ualberta.ca }}

\affil[]{Department of Mathematical and Statistical Sciences, University of Alberta, Edmonton, Canada }


\date{}

\maketitle

\begin{center}
{\bf Abstract}
\end{center}


Over the past century, atmospheric methane levels have nearly doubled, posing a significant threat to ecosystems. Despite this, studies on its direct impact on species interactions are lacking. Although bioaccumulation theory explains the effects of contaminants in trophic levels, it is inadequate for gaseous pollutants such as methane. This study aims to bridge the gap by developing a methane-population-detritus model to investigate ecological impacts in aquatic and terrestrial ecosystems. Our findings show that low methane concentrations can enhance species growth, while moderate accumulation may induce sub-lethal effects over time. Elevated methane levels, however, lead to ecosystem collapse. Furthermore, prolonged exposure to the gas increases the sensitivity of species towards rising temperatures. Multiscale analysis reveals that rapid methane accumulation leads to long transients near the extinction states. We argue that high emission rates can push the system towards a critical threshold, where the ecosystem shifts to an alternative stable state characterized by elevated methane concentrations. This work highlights the urgent need for a better understanding of the fatal role of methane in ecosystems for developing strategies to mitigate its effects amid climate change.

\vspace{1.0cm}
	
\noindent
{\bf Keywords:} methane toxicity; dynamical modelling; multiscale analysis; long transients; regime shift

\vspace{1cm}
\newpage 


	

	
\newpage

\section{Introduction}

Anthropogenic activities across the globe release substantial amounts of methane into the environment, significantly contributing to environmental pollution \cite{Hader2020,MethaneBudget}. Elevated methane concentrations not only accelerate global warming but also disrupt ecological processes, such as altering species distribution, reducing biodiversity, and shifting ecosystem dynamics \cite{Bellard2012}. 
Methane ($\mathrm{CH}_4$) is emitted from various sources such as oil sand tailing ponds, wetlands, wildfires, and geological seeps. However, human activities, including agriculture, fossil fuel extraction and consumption, landfill decomposition, and biomass combustion, are major contributors to methane emissions \cite{aydin2012methane}. During the production and transport of coal, natural gas, and oil, as well as through the dumping and processing of organic waste and wastewater a significant amount of methane is emitted into atmosphere \cite{karakurt2012sources}. Additionally, the anaerobic decomposition of organic materials in landfills or in water bodies contaminated by industrial waste leads to significant methane release into the atmosphere \cite{siddique2011anaerobic}. Figure.~\ref{Fig:Methane_sources} collectively describes the emission from major sources. 

\begin{figure}[ht!]
    \centering
    \includegraphics[width=0.8\linewidth]{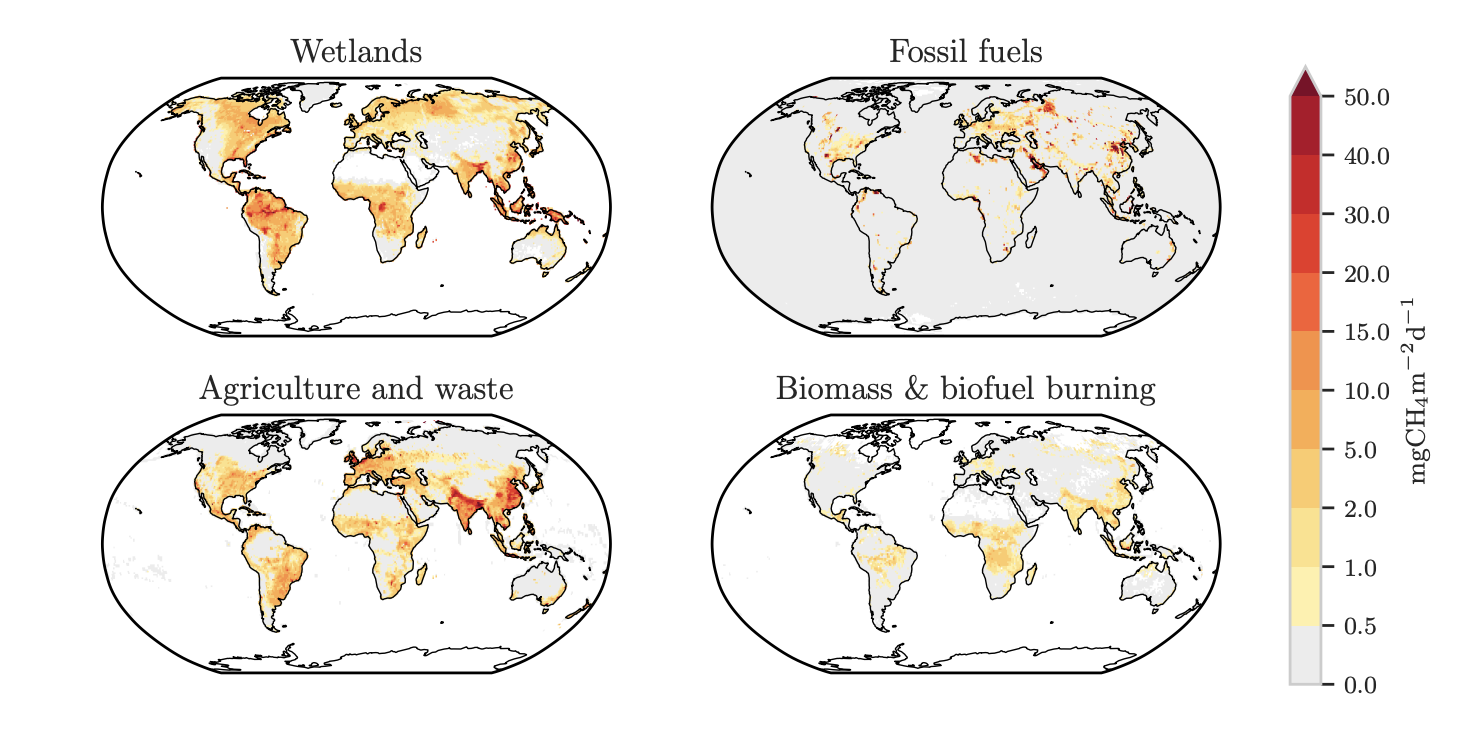}
    \caption{Global methane emission from major source categories from 2010-2019 adapted from \cite{MethaneBudget}.}
    \label{Fig:Methane_sources}
\end{figure} 

Methane emissions not only contribute to climate change but also alter the chemical composition of ecological species, influencing ecological interactions and the population dynamics of key species \cite{Rosenblatt2016, Beever2017, Briffa2012,paerl2018global}. Mitigating methane emissions and nutrient loading is crucial for maintaining ecological balance and preserving the functional roles of species within ecosystems. In recent decades, greenhouse gases have garnered significant attention due to their strong association with rising environmental temperatures. Among the primary, greenhouse gases ($\mathrm{CO}_2,\,\,\mathrm{CH}_4,\,\mathrm{N_2O}$), methane $(\mathrm{CH}_4)$ is the second most abundant GHGs after carbon dioxide $(\mathrm{CO}_2).$ However, the comparative impact of $\mathrm{CH}_4$ is 28 times greater than $\mathrm{CO}_2$ over a 100-year period \cite{fernandez2020methane}. 
While $\mathrm{CO}_2$ emissions have been regulated by governments in many countries, methane emissions have been underestimated over the past decades. This oversight has allowed some industries to exploit regulatory gaps, deliberately emitting methane to avoid government inspections. 

Research on toxicity has traditionally focused on contaminants such as heavy metals \cite{jaishankar2014toxicity}, pesticides, and organic pollutants \cite{Fleeger}. For example, studies have investigated the effects of sulfur dioxide (\(\mathrm{SO}_2\)) on plant growth, photosynthesis, and respiration \cite{garsed1981sulfur}, as well as the impacts of nitrogen dioxide (\(\mathrm{NO}_2\)), \(\mathrm{SO}_2\), and ozone (\(\mathrm{O}_3\)) on radish plants \cite{reinert1981radish}. 
Several studies examined the impacts of zinc (Zn), copper (Cu), and mercury (Hg) on aquatic populations \cite{armstrong1979mercury, widdows1997mussels}. While these efforts have advanced our understanding of contaminant toxicity, dissolved methane remains an overlooked area of study. Despite being recognized as a potent greenhouse gas, its potential toxicological effects within aquatic ecosystems have received little attention. Considering its ability to influence oxygen availability, microbial activity, and overall ecosystem stability, it highlights the urgent need for detailed toxicity assessments. Future research should prioritize the evaluation of methane toxicity to better understand its broader environmental consequences. 

Ecological risks are often assessed through individual-level responses like fecundity and mortality under controlled conditions, which may not fully reflect ecosystem-level impacts. While population-level toxicity tests are costly and impractical, as a result, mathematical modelling has emerged as a valuable tool in ecotoxicology. 
Different types of mathematical models including toxicity-extrapolation models (e.g., \cite{pastorok2003role,pastorok2016ecological}), Toxicokinetic-Toxicodynamic (TKTD) models (e.g., \cite{revel2024tktd}), matrix population models (e.g., \cite{spromberg2006toxicity,hayashi2009population}), bioaccumulation models (e.g., \cite{arnot2004bioaccumulation, mathew2008pah} and the ordinary and partial differential equation models (e.g., \cite{Huang13,huang2015impact,wang2024stoichiometric,Zhou,MISRA20138595,TAKHIROV2024100414,Karim2024}) have been developed to study the impact of environmental toxicants on the dynamics of exposed populations in polluted ecosystems. These works highlight the chemical risks across biological hierarchies, from cells to entire ecosystems. 

Existing toxin-based models typically assume that higher-trophic organisms primarily absorb toxins through food ingestion, accumulating more pollutants as they move up the food chain \cite{Hallam83,Thieme,Freedman91}. This bioaccumulation is commonly observed with contaminants such as microplastics and chemical toxins \cite{huang2015impact,wang2024stoichiometric}. However, methane, as a gas, disperses easily and has not been observed to bioaccumulate. The effects of methane and the sensitivity of organisms at different trophic levels vary significantly. Therefore, the previous modelling frameworks are not suitable for this work. Moreover, methane plays a beneficial role in primary ecosystems by supporting the growth of photoautotrophs through microbial processes such as methanotrophy, which converts methane into available carbon sources. This dual nature of methane, both a pollutant and a resource, requires a specialized modelling approach to accurately capture its ecological effects.

The pathways of methane emission are highly complex to incorporate into a single modelling framework. Therefore, in this paper, we present a simple mechanistic approach to model the interactions between methane, resources, consumers, and detritus in the ecosystem using differential equations. The dynamics of methane concentration are captured with two primary sources: external inputs from human activities and methane production through the decomposition of organic matter. We discuss changes in the density of consumers and resources, with their growth and development influenced by both methane concentration and temperature, and study the decomposition of organic matter through changes in detritus density. The primary aim of this work is to provide a framework for understanding how rising methane levels impact trophic dynamics. This work serves as an initial step towards investigating the ecological impact of methane.



\section{Model and Methods}\label{sec:Methods}
\subsection{Model formulation}\label{sec:Model formulation}
In this section, we develop a conceptual model to investigate the influence of methane on a resource-consumer system with the help of differential equations. The model consists of four state variables: the concentration of methane gas $(M),$  density of resources $(X),$ density of consumer $(Y)$, and the density of detritus $(D)$ at a particular time $t.$ A schematic sketch of our model is shown in Fig.~\ref{fig:framework}. The rate of change in environmental methane concentration is based on the mass balance equation. We consider the rate of production of methane from anthropogenic activities as $E_{\mathrm{in}}(M)$, while the methane generated from the decomposition of dead organic matter is denoted as $E_D(D)$. The natural loss of methane occurs through various pathways including diffusion, uptake by organisms, and oxidation by methane-oxidizing bacteria (MOB). In oxygenated conditions, MOB converts methane to carbon dioxide and water, removing 45–100\% of methane in lake ecosystems \cite{Bastviken23}. In aquatic ecosystems, methane majorly escapes via ebullition \cite{Bastviken04}, while in terrestrial systems, it diffuses into air or water.


\begin{figure}[ht!]
    \centering
    \includegraphics[width=0.5\linewidth]{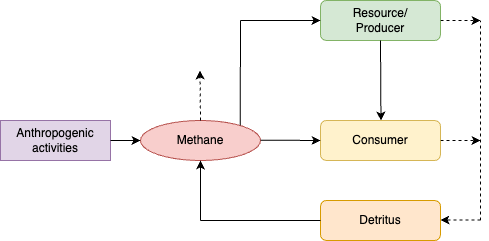}
    \caption{Schematic diagram of the model.}
    \label{fig:framework}
\end{figure}

For a more general applicability of our model, we discuss the assumptions used to formulate the functional forms of the model in the case of both the aquatic and terrestrial ecosystems. The dumping of industrial effluent, agricultural runoff from fields, or leaks from the oil and gas industries often enter nearby water bodies or diffuse to higher air levels. In aquatic ecosystems, water has limited capacity to dissolve gases. Let \( \Bar{M} \) denote the maximum methane concentration that water can dissolve \cite{Grabowska22}, and assume a constant external methane input rate \( M_{\mathrm{in}} \). The external methane input function for aquatic ecosystems is then expressed as 
\[
E_{\mathrm{in}}(M) = \max\{c_1(\Bar{M} - M), 0\}M_{\mathrm{in}},
 \]
where \( c_1 \) represents the dissolution rate of methane in water.  
In contrast, terrestrial ecosystems have a much greater capacity to absorb gases. As a result, the input rate is approximately constant, and the input function simplifies to:  
\[
E_{\mathrm{in}}(M) = M_{\mathrm{in}}.
\]
Assuming the bacterial decomposition of the detritus at a rate $d_4(D) = pD,$ the emission of methane from the dead organic matter is determined by $E_D(D) = e_1pD$ \cite{Bartosiewicz}. 

The emission of methane from ruminant animals depends linearly on their body size \cite{Franz11}, and small herbivores produce a negligible amount of methane. Therefore, we consider $E_P(X,Y)$ as a negligible quantity for our purpose. Further, considering the small size of the consumers in the lower trophic level, the uptake of methane is also insignificant\cite{Hallam84}. For aquatic ecosystems, we consider two possible pathways of loss of methane: oxidation of $\mathrm{CH}_4$, and loss from water surface to atmosphere. Studies reveal a linear relationship between methane dissolved in water and the oxidation rate. Therefore, we have $$d_1(M) = -M_{\mathrm{out}}M,$$ where $M_{\mathrm{out}}$ is the net rate of loss of methane from the lake to the atmosphere from various pathways which includes ebullition flux and oxidation \cite{Bastviken04}. For terrestrial ecosystems, we also assume a linear output $d_1(M) = -M_{\mathrm{out}}M$. The growth function of prey or primary producers, \( g_1 \), is modelled using the classical logistic growth function, where the maximal growth rate depends on both temperatures (\( T \)) and methane concentration (\( M \)). The temperature-dependent photosynthetic activity of aquatic and terrestrial autotrophs is well-studied in the literature and follows an unimodal functional form:  
$$
\dfrac{r}{1 + \gamma_1(T - T_X)^2},
$$ 
where the maximum growth occurs at an optimal temperature (\( T_X \)), and \( r \) and \( \gamma_1 \) represent the maximal growth rate and the influence of temperature, respectively. Under oxic conditions, methane oxidation occurs through MOB contributing to accumulation of inorganic carbon. The methane oxidation rate exhibits a linear relationship with the dissolved methane concentration \cite{DAngelo}. The carbon dioxide produced during methane oxidation is subsequently taken up by phytoplankton in the presence of light during photosynthesis, thereby indirectly supporting the growth of phytoplankton \cite{Cerbin}. The mechanism is illustrated in Fig.~\ref{fig:MOB_mechanism}.  
\begin{figure}[ht!]
    \centering
    \includegraphics[width=10cm, height=3cm]{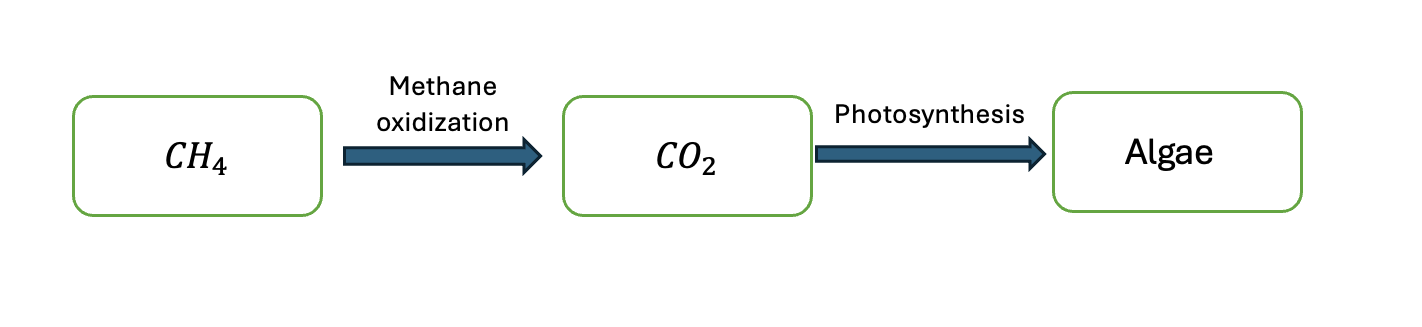}
    \caption{Mechanism representing the growth of algae due to methane oxidation.}
    \label{fig:MOB_mechanism}
\end{figure}
This suggests that the growth of primary producers increases with methane oxidation \cite{Enebo,Hadiyanto}. We assume this dependence to be linear, modeled as \( 1 + \gamma_M M \), where \( \gamma_M \) represents the influence of methane on growth. The consumer-resource interaction, \( f(X) \), is modeled using the classical Holling type-II functional response \cite{Holling} as
\[
f(X) = \dfrac{\alpha X}{\beta + X},
\]  
where \( \alpha \) and \( \beta \) are parameters describing the consumer attack rate and the half-saturation constant, respectively. 

Assuming that the conversion efficiency of ectotherm species \cite{Amarasekare12} is temperature-dependent, we consider  
\[
e(T) = \dfrac{e}{1 + \gamma_2(T - T_Y)^2},
\]  
where the maximum growth of predators, driven by grazing on prey or primary producers, occurs at an optimal temperature \( T_Y \).  
In this work, we assume \( e \) to be independent of methane. This assumption is biologically reasonable based on experimental studies on the growth rate and reproduction of Daphnia enriched with biogenic methane \cite{Kankaala06}.  
The predators' mortality rate due to the influence of methane can be modeled using a power law of the form \cite{Huang13}
\[
d_Y + d_M M^l,
\]  
where \( d_Y \) represents the natural mortality of predators, \( d_M \) quantifies mortality due to methane, and \( M \) is the methane concentration. For simplicity, we consider the special case of \( l = 1 \), leading to the mortality function 
\[
d_3(M, Y) = d_Y + d_M M,
\]  
where \( d_Y \) is the natural mortality rate of predators, and \( d_M \) quantifies the impact of methane on mortality.  
Since dead organic matter contributes to the detritus pool \cite{QAn, Andrew}, we model the input to detritus as  
\[
g_2(X, Y, M) = d_X X + \big(d_Y + d_M M\big)Y,
\]  
where \( d_X \) is the mortality of prey or primary producers, and \( Y \) represents the consumer population. Based on these assumptions, we propose a novel model applicable to both aquatic and terrestrial ecosystems.  

\begin{equation}\label{Eq:Model}
    \begin{aligned}
          \frac{\mathrm{d} M}{\mathrm{d} t} &=f(M,D),\\
        \frac{\mathrm{d} X}{\mathrm{d} t} &= \frac{r}{1+ \gamma_1 (T-T_X)^2} \left( 1 + \gamma_M M \right) X\left(1-\frac{X}{K}\right) - \frac{\alpha X}{\beta + X}Y  - d_X X,\\
        \frac{\mathrm{d} Y}{\mathrm{d} t} &= \frac{e}{1+ \gamma_2 (T-T_Y)^2}\frac{\alpha X}{\beta + X}Y  - (d_Y + d_M M) Y,\\
         \frac{\mathrm{d} D}{\mathrm{d} t} & =  d_X X + (d_Y + d_M M) Y - pD.
    \end{aligned}
\end{equation}
For the aquatic ecosystem, we define \( f(M,D) \) as  
\begin{equation}\label{Eq:aquatic_eq}
f(M,D) = \max\{c_1\left(\Bar{M} - M\right), 0\}(M_{\text{in}} + e_1 p D) - M_{\text{out}} M,
\end{equation}
where \( \Bar{M} \) is the maximum methane concentration that water can dissolve. For the terrestrial ecosystem, we take \( f(M,D) \) to be
\begin{equation}\label{Eq:terrestrial_eq}
    f(M,D) = M_{\text{in}} + e_1 p D - M_{\text{out}} M.
\end{equation}

\subsection{Model parameterization}

The model in \eqref{Eq:Model} represents a general methane-prey-predator-detritus system. 
For the model in \eqref{Eq:Model} with \eqref{Eq:aquatic_eq}, we parameterize the system using algae and daphnia as representative prey and predators in the aquatic ecosystem. However, studies investigating methane's impact on the ecosystem are scarce. Consequently, the parameters are chosen based on the general prey-predator systems in the literature, while some are estimated based on available data, as detailed in SI \ref{sec:Parameterization}.

\subsection{Multiple timescales analysis} \label{sec:Multiple timescales framework of the model}

The continuous input of methane into the water is assumed to be faster in highly active industrial areas than the biological processes like the growth and development of species. In contrast, detritus generation and decomposition occur at a much slower time. This creates distinct differences in timescales in the system. In this section, we analyze the model by decomposing it into subsystems, each capturing dynamics at a specific timescale \cite{Kuehn15}. While we use the aquatic ecosystem model as an example to explore these multiscale dynamics, a similar approach can be applied to terrestrial models. We non-dimensionalize the model \eqref{Eq:Model} to focus on the key parameters that govern the system's behaviour. The details of the non-dimensionalization can be found in the  SI \ref{sec: Nondimensionalization}.

\begin{equation}\label{Eq:three_timescale_model}
\begin{aligned}
     \frac{\mathrm{d} m}{\mathrm{d} t} &= \zeta\max\{ \left(\theta-m\right),0\}(1 + e_1w) - \sigma m \equiv
      F_1(m,w),\\
     \frac{\mathrm{d} u}{\mathrm{d} t} &= \varepsilon_1\left(\rho_1 u\left( 1 + m \right)\left(1-u\right) - \frac{ \tau uv}{\kappa + u}  - \mu_1 u\right) \equiv
      \varepsilon_1 F_2(m,u,v),\\
     \frac{\mathrm{d} v}{\mathrm{d} t} &= \varepsilon_1\left(\rho_2  \frac{\tau uv}{\kappa + u}  - \left(\mu_2 + \eta m\right) v\right) \equiv
      \varepsilon_1 F_3(m,u,v),\\
    \frac{\mathrm{d}w}{\mathrm{d} t} & =  \varepsilon_2\left( \mu_1 u + \left(\mu_2 + \eta m\right) v - \delta w\right)\equiv
      \varepsilon_2 F_4(m,u,v,w).         
\end{aligned}
\end{equation}

\subsubsection{Dynamics of subsystems}
To investigate the dynamics of the non-dimensionalized model \eqref{Eq:three_timescale_model}, we decompose it into subsystems and analyze it with the help of geometric singular perturbation theory \cite{Fenichel,Krupa01b}.
\paragraph{Fast dynamics} The layer subsystem or the fast system is obtained for $\varepsilon_1,\varepsilon_2 \rightarrow 0$ in the system \eqref{Eq:three_timescale_model}, which is given by 
\begin{equation}\label{eq:fast_dynamics}
    \begin{aligned}
        &\frac{\mathrm{d} m}{\mathrm{d} t} = \zeta\max\{ \left(\theta-m\right),0\}(1 + e_1w) - \sigma m \equiv
      F_1(m,w),\\
     &\frac{\mathrm{d} u}{\mathrm{d} t} =0,\,
     \frac{\mathrm{d} v}{\mathrm{d} t} = 0,\,\
    \frac{\mathrm{d}w}{\mathrm{d} t}  =  0.
    \end{aligned}
\end{equation}
This implies that for $u,\,v$ and $w$ fixed at a constant value, we obtain a fast transition in the concentration of methane dissolved in water.
The set of all equilibrium points of the fast subsystem is given by the set $C^1 = \{(m,u,v,w): F_1(m,w) = 0\}, $ which we denote as critical manifold. We can write $F_1(m,w)=0$ in the form of $$w:=\phi(m) = \frac{1}{e_1}\frac{m(\sigma+\zeta)-\zeta\theta}{\zeta(\theta-m)},$$ considering $m<\theta.$ However, to visualize this critical manifold in the higher-dimensional space is difficult. Thus, we study the system \eqref{eq:fast_dynamics} for some fixed values of $w=w_0,$ and the fast flow is determined by the differential equation of $m$. The critical manifold is hyperbolic as $\phi'(m) = \frac{\theta \sigma}{e_1\zeta(m-\theta)^2}>0$ for all the biologically feasible parameter values, and for a fixed value of $w=w_0,$ we have 
$$\frac{dF_1}{dm} = \left\{\begin{array}{ll}-\zeta(1+e_1w_0)-\sigma, &0< m<\theta\\
-\sigma, & m>\theta \\
\end{array}\right. $$
In both cases, the critical manifold \(C^1\) is an attracting manifold. This means that trajectories starting from initial conditions \((m_0, u_0, v_0)\) are rapidly attracted to \(C^1\), as shown in Figure~\ref{fig:fast_flow}(a). Consequently, methane concentration (\(m\)) increases abruptly from its initial value \(m_0\) within a short time, as illustrated in Figure~\ref{fig:fast_flow}(b). Biologically, this implies that when the amount of detritus is fixed, methane concentration increases rapidly until it reaches the solubility limit of water.

\begin{figure}[ht!]
    \centering
    \subfigure[]{\includegraphics[width=0.35\linewidth]{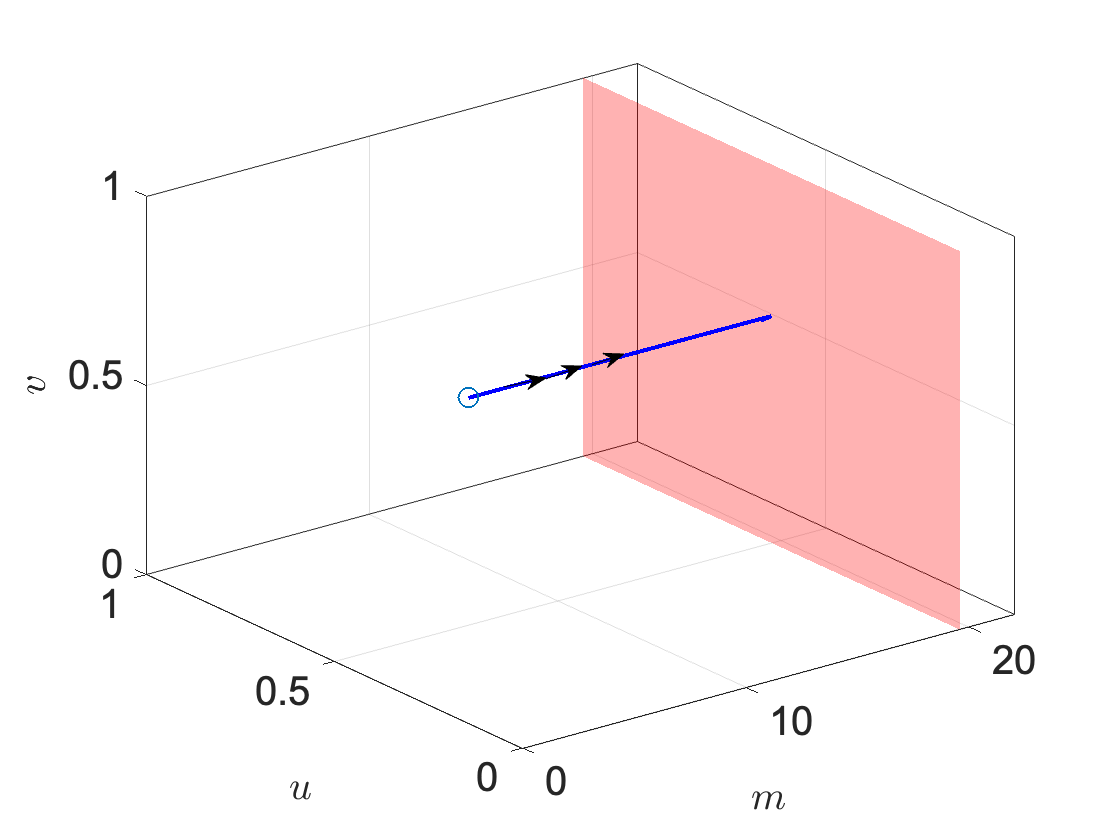}}
    \subfigure[]{\includegraphics[width=0.35\linewidth]{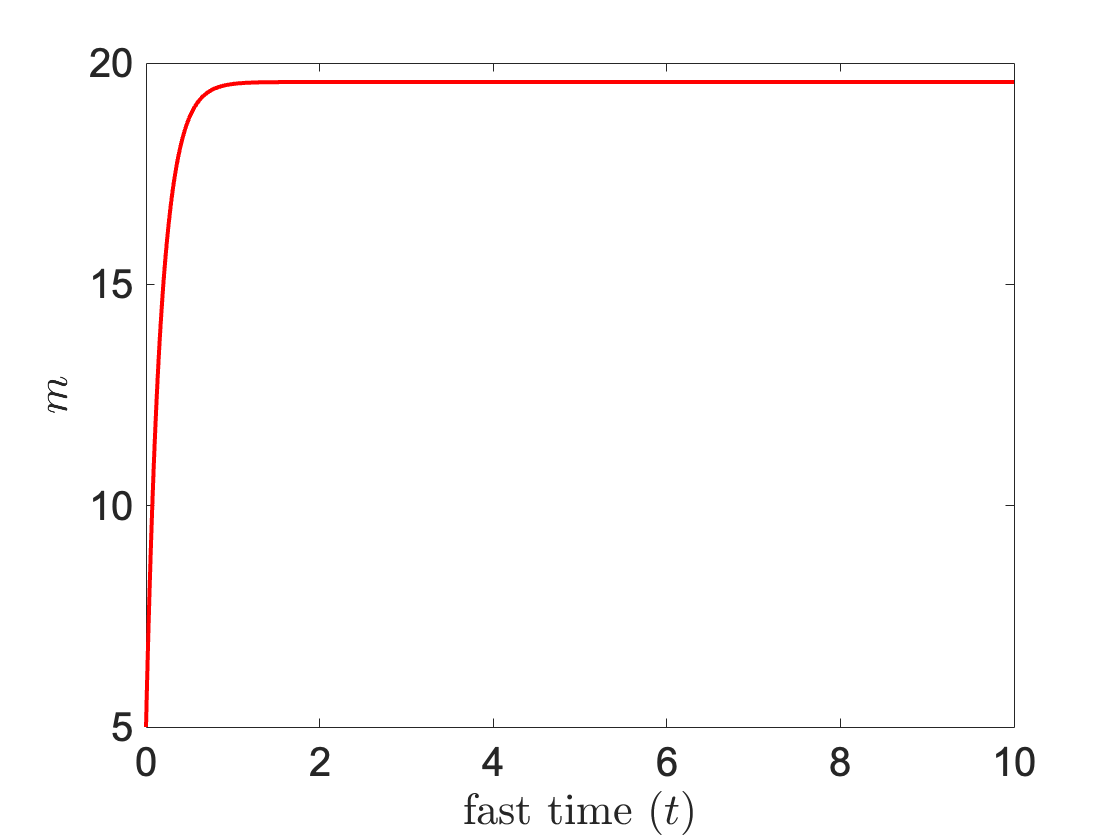}}
\caption{(a) The critical manifold \(C^1\) in the \(muv\) space is shown in red for a fixed \(w_0\), with the trajectory converging toward it shown in blue. Triple arrows indicate the fast flow toward \(C^1\). (b) The corresponding rapid rise in methane concentration (\(m\)) in water is shown with respect to the fast timescale \(t\).}
    \label{fig:fast_flow}
\end{figure}

\paragraph{Intermediate dynamics} For $w=w_0$, the intermediate dynamics is obtained from  intermediate subsystem (SI. Eqn.~\eqref{Eq:intermediate_system}):

\begin{equation}\label{Eq:intermediate_model}
\begin{aligned}
   0 &= \zeta\max\{ \left(\theta-m\right),0\}(1 + e_1w_0) - \sigma m ,\\
     \frac{\mathrm{d} u}{\mathrm{d} t_1} &= \left(\rho_1 u\left( 1 + m \right)\left(1-u\right) - \frac{ \tau uv}{\kappa + u}  - \mu_1 u\right),\\
     \frac{\mathrm{d} v}{\mathrm{d} t_1} &= \left(\rho_2  \frac{\tau uv}{\kappa + u}  - \left(\mu_2 + \eta m\right) v\right).  
\end{aligned}
\end{equation} 
The solutions \((m, u, v)\) describe the flow constrained on the \(C^1\) manifold for \(w = w_0\). Here, $t_1$ is the intermediate timescale, and we have a system with 1-fast and 2-intermediate variables. The intermediate subsystem has three different kinds of equilibrium: The trivial equilibrium corresponding to the complete extinction $(m^*,0,0)=\left(\frac{\zeta\theta(1+e_1w_0)}{\zeta(1+e_1w_0)+\sigma},0,0\right),$ the zooplankton-free equilibrium $$(m^*,u^*,0)= \left(\frac{\zeta\theta(1+e_1w_0)}{\zeta(1+e_1w_0)+\sigma},\frac{\rho_1((1+m^*)-\mu_1)}{\rho_1(1+m_*)},0\right),$$ and a unique coexistence equilibrium $$(m^*,u^*,v^*) = \left(\frac{\zeta\theta(1+e_1w_0)}{\zeta(1+e_1w_0)+\sigma},\frac{\kappa(m^*\eta+\mu_2)}{\rho_2\tau-\mu_2-m^*\eta},\frac{(\kappa+u^*)(\rho_1(1+m^*)(1-u^*)-\mu_1)}{\tau}\right).$$ 
A trajectory starting from a generic point in the state space is first attracted to the critical manifold with a fast flow. Subsequently, it evolves at an intermediate speed along the manifold, guided by the dynamics of the intermediate subsystem. Biologically, this implies that when the detritus amount is fixed, methane concentration increases rapidly, and population dynamics adjust accordingly. However, these adjustments depend on the specific interactions among organisms. Typically, the population dynamics are slower than the methane dynamics.

\paragraph{Slow dynamics} The slow dynamics are given by the reduced subsystem as follows
\begin{equation}\label{eq:slow_dynamics}
    \begin{aligned}
      &F_1(m,w) = 0,\,F_2(m,u,v) =0,\,F_3(m,u,v)=0,\\
    & \frac{\mathrm{d}w}{\mathrm{d} t_2}  =  \left( \mu_1 u + \left(\mu_2 + \eta m\right) v - \delta w\right)\equiv F_4(m,u,v,w).
    \end{aligned}
\end{equation}
The slow flow occurs along the curve of intersections of the above surfaces in four-dimensional space. Since visualizing this curve is difficult, we study the system for a fixed value of \(w = w_0\). Biologically, this simplification is justified because \(w\) evolves on a much slower timescale compared to the other variables.

\section{Results}\label{sec:Numerical Results}
 In this section, we investigate the model \eqref{Eq:Model}-\eqref{Eq:terrestrial_eq} focussing on the two ecosystems: aquatic and terrestrial. For the long-term behaviour of the system, we examine the system behaviour near the steady states. The system can depict three possible outcomes: complete extinction of resource, consumer and detritus; extinction of consumer and coexistence of both species. The analytical description of the steady states and their stability is discussed in SI.~\ref{App:Mathematical_Analysis}. We investigate the impact of methane on the model \eqref{Eq:Model} with \eqref{Eq:aquatic_eq} parameter values given in Table\eqref{tab:Model_para_table_aqua}.


\begin{table}[ht!]\scriptsize
\centering
    \begin{tabular}{ m{2cm}  m{8.5cm} m{3cm}} 
    \hline
 Parameter & Description & Value \\ 
 \hline 
  $T$ & Temperature of the environment & 20  \\ 
  
  $T_X$ & Optimal temperature for the growth of producer & 25 \\  

  $T_Y$ & Optimal temperature for the growth of consumer & 20 \\
  
  $c_1$ &  Rate of dissolution of methane in water & 0.05 \\
  
  $\Bar{M}$ & Saturation of methane in water & 22.7 \\
  
  $M_{\mathrm{in}}$ & The exogenous input of methane & $0-100$ \\
  
  $M_{\mathrm{out}}$ & The natural rate of loss of methane & 1\\
 
  $e_1$ & Proportionality constant  &  0.002 \\

  $r$ & Maximum growth rate of producer & 1.2\\

  $\gamma_M$ & Influence of methane to the growth of producer & 1\\

  $K$ & Maximum carrying capacity & 1\\
  
  $\alpha$ &Maximum predation rate & 0.8\\
  
  $\beta$ & Half-saturation constant& 0.25 \\
  
  $d_X$ & Natural death rate of algae & 0.02\\
  
  $d_Y$ & Natural death rate of daphnia& 0.01 \\

  $e$ & Maximal conversion efficiency of daphnia & 0.55\\
  
  $d_M$ & Influence of methane on the death of daphnia & 0.02 \\
  
  $p$ & Decomposition rate of detritus & 0.05\\

  $\gamma_1$ & Temperature related constant for growth of algae & 1 \\

  $\gamma_2$ & Temperature related constant for growth of daphnia & 1\\
  
 \hline
\end{tabular}
 \caption{Parameters of the model \eqref{Eq:Model}-\eqref{Eq:terrestrial_eq}.}
    \label{tab:Model_para_table_aqua}
\end{table}

\subsection{Methane Toxicity and Temperature Sensitivity}

To illustrate the combined effects of temperature and methane on the persistence and extinction of plankton species, we consider $T$ as the bifurcation parameter and observe the change in the stability of the steady states for different concentrations of $M_{\mathrm{in}}.$ The one-parameter bifurcation diagram implying the change in the behaviour of the system dynamics is described in Fig.~\ref{fig:Temperature_bifurcation} of SI \ref{app:extra_figure}.
Fig.~\ref{fig:Two_parameter_T_Min} shows the change in the densities of algae ~\ref{fig:Two_parameter_T_Min_A} and daphnia ~\ref{fig:Two_parameter_T_Min_D} over different temperature and input rates of methane. Algae densities increase and reach their carrying capacity with rising temperatures and methane levels. In contrast, daphnia thrives near its optimal temperature when methane concentrations are low. For \(M_{\mathrm{in}} < 20\), daphnia density increases due to the abundance of algae. Beyond this threshold, daphnia density declines despite the ample food supply, eventually leading to extinction at higher \(M_{\mathrm{in}}\). The temperature sensitivity due to the toxicity effect holds true in the case of terrestrial ectotherm species. However, for the sake of brevity, we omit showing this here. Our findings signify that higher methane concentrations narrow the temperature range for coexistence and fasten the daphnia extinction. This indicates that prolonged exposure to moderate methane levels increases daphnia sensitivity to rising temperatures.

\begin{figure}[ht!]
    \centering
    \subfigure[Algae]{
    \includegraphics[width=0.45\textwidth]{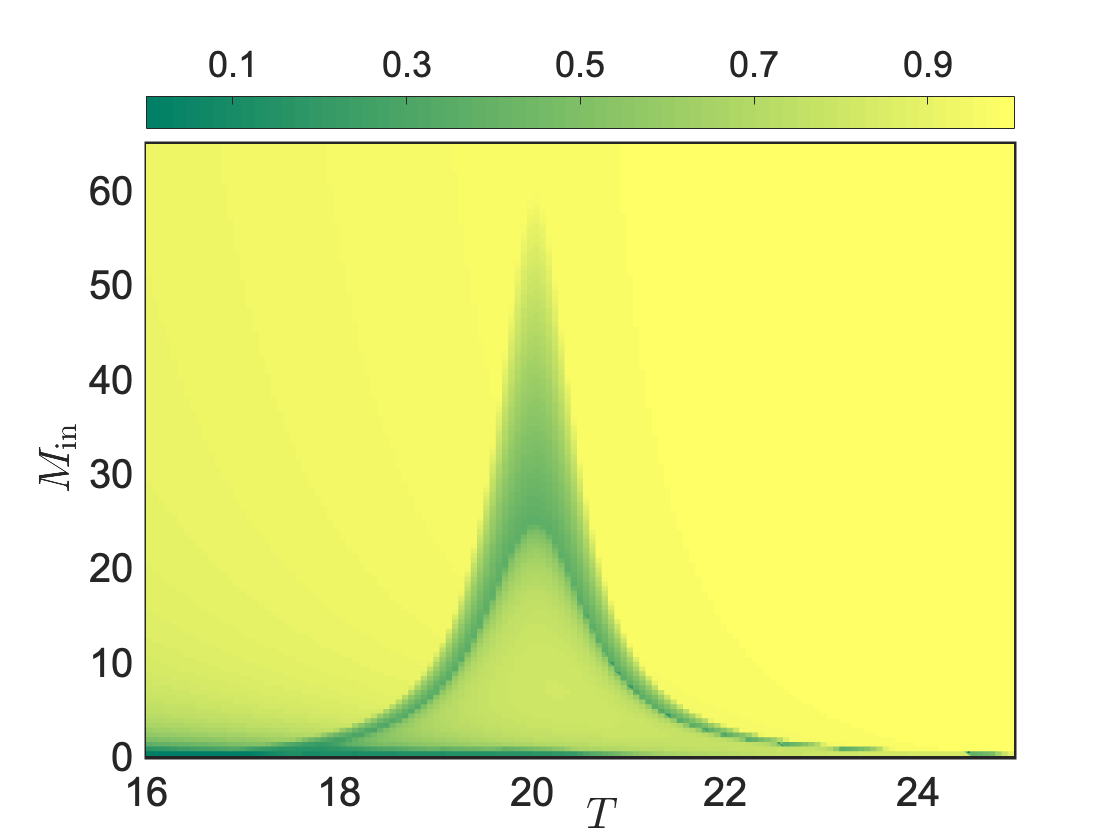}
    \label{fig:Two_parameter_T_Min_A}}
     \subfigure[Daphnia]{
    \includegraphics[width=0.45\textwidth]{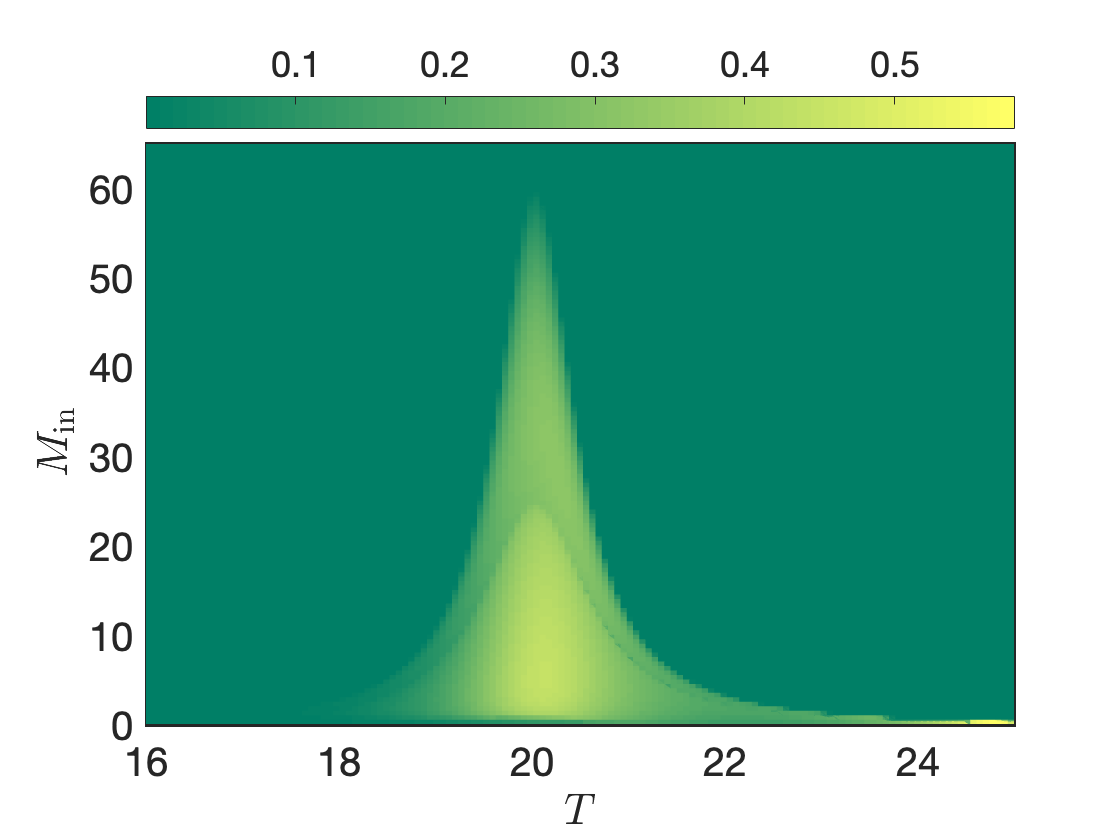}
    \label{fig:Two_parameter_T_Min_D}}
    \caption{The variation in the density of algae and daphnia for different values of temperature and external input rate of methane. }
    \label{fig:Two_parameter_T_Min}
\end{figure}

\subsection{The dual effect of methane on plankton dynamics}
Previously, we observed increased daphnia density at low to moderate \(M_{\mathrm{in}}\) levels at optimal temperature. To further explore the interactions between algae and daphnia at this temperature, we fix the water temperature at \(T = 20\).
\begin{figure}[ht!]
    \centering
    \subfigure[$M_{\mathrm{in}}=5$]{\includegraphics[width=0.3\textwidth]{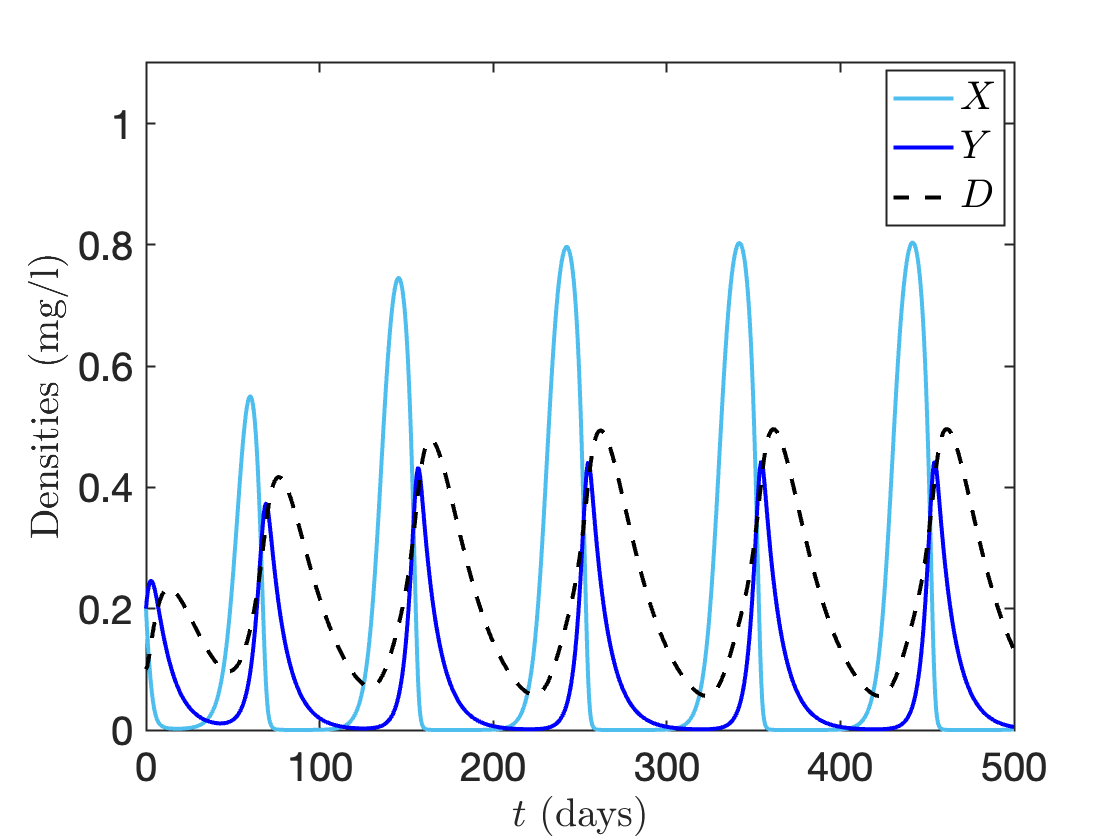}
    \includegraphics[width=0.3\textwidth]{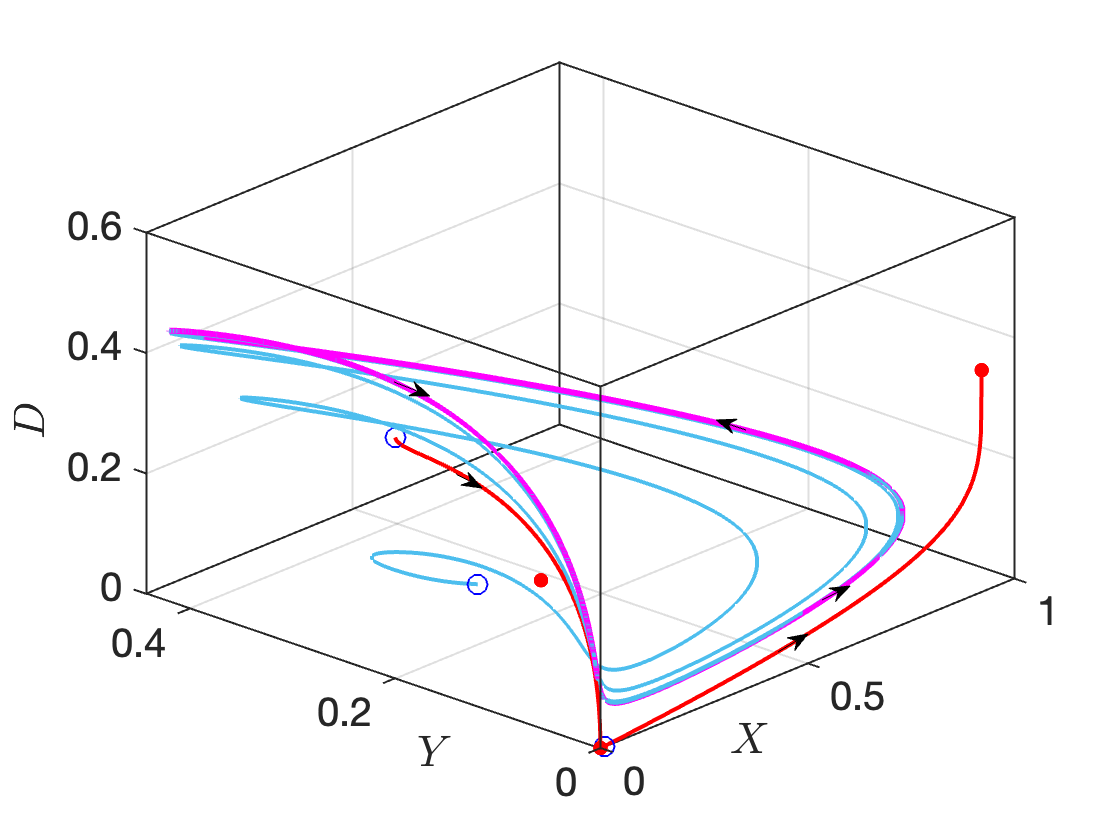}
    }
     \subfigure[$M_{\mathrm{in}}=15$ ]{\includegraphics[width=0.3\textwidth]{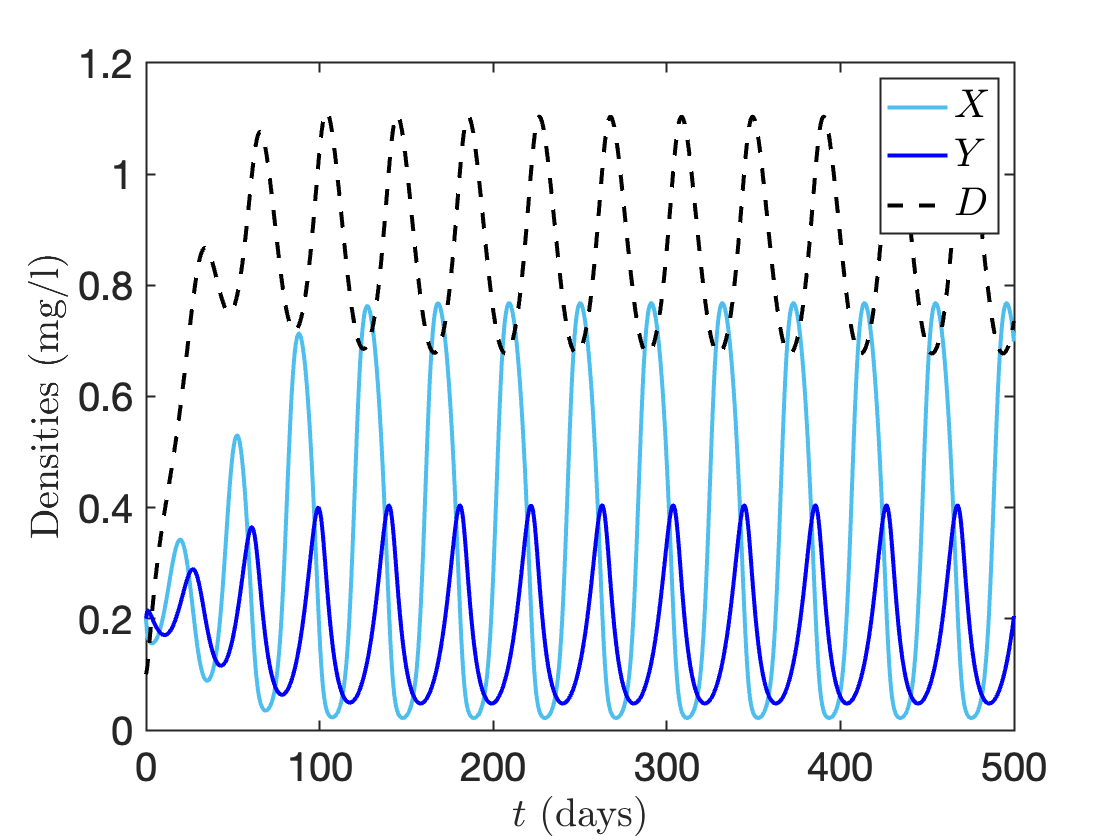}
     \includegraphics[width=0.3\textwidth]{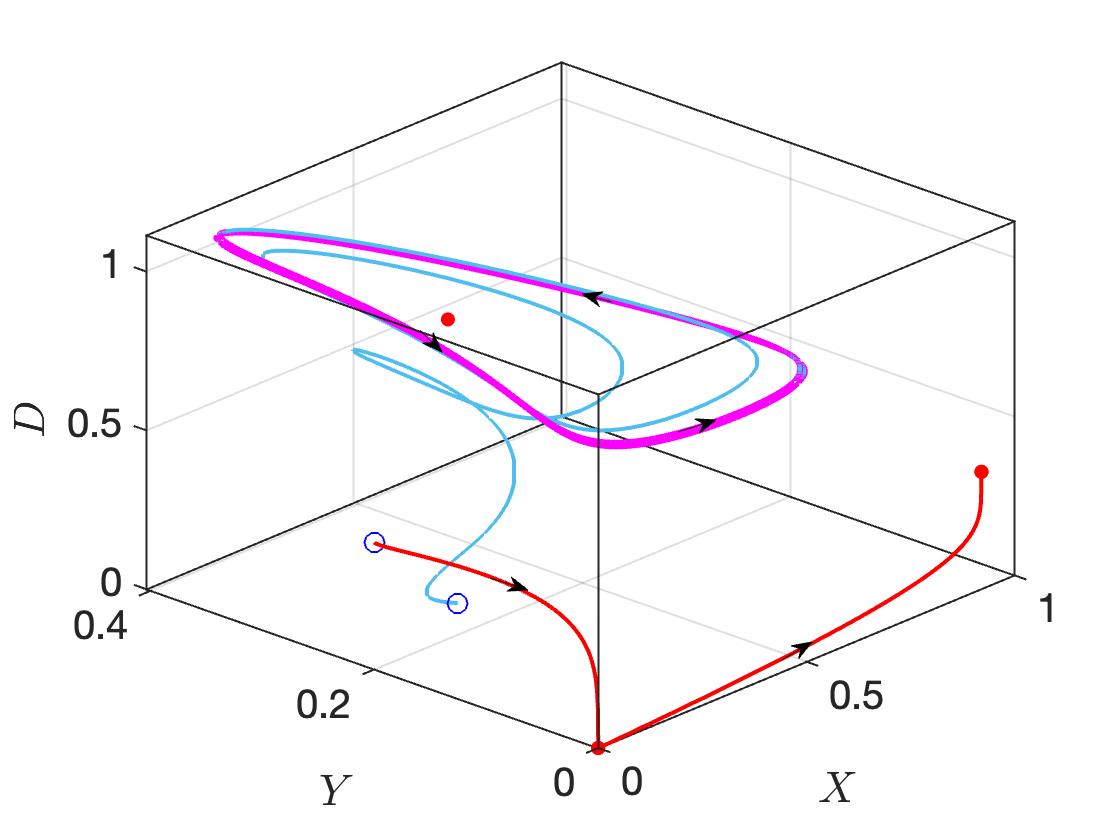}
     }
     \subfigure[$M_{\mathrm{in}}=30$]{\includegraphics[width=0.3\textwidth]{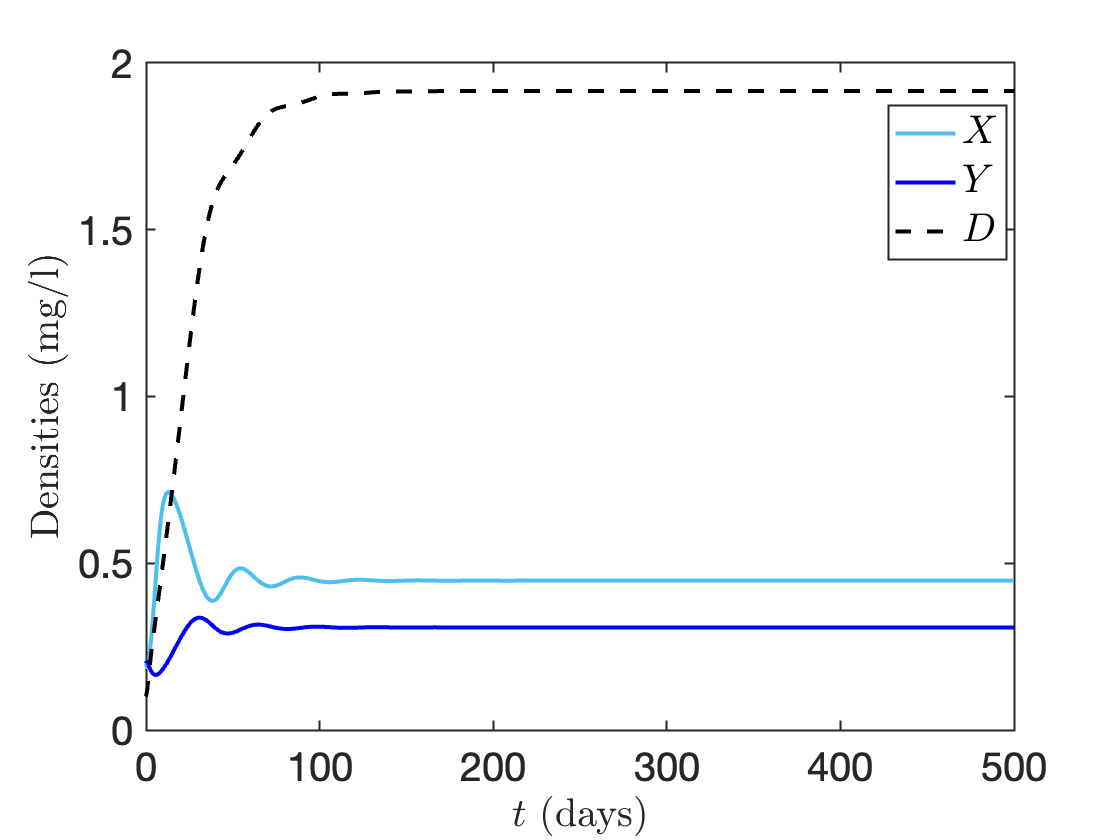}
     \includegraphics[width=0.3\textwidth]{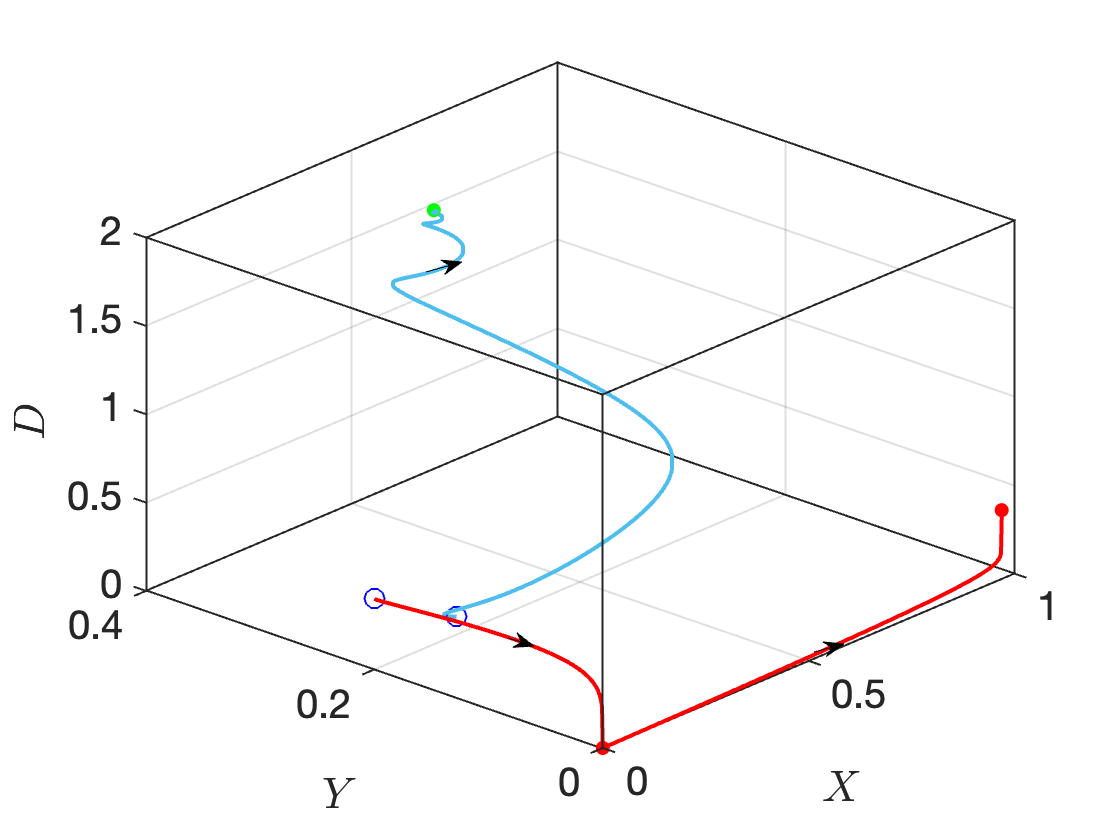}
     }
      \subfigure[$M_{\mathrm{in}}=65$ ]{\includegraphics[width=0.3\textwidth]{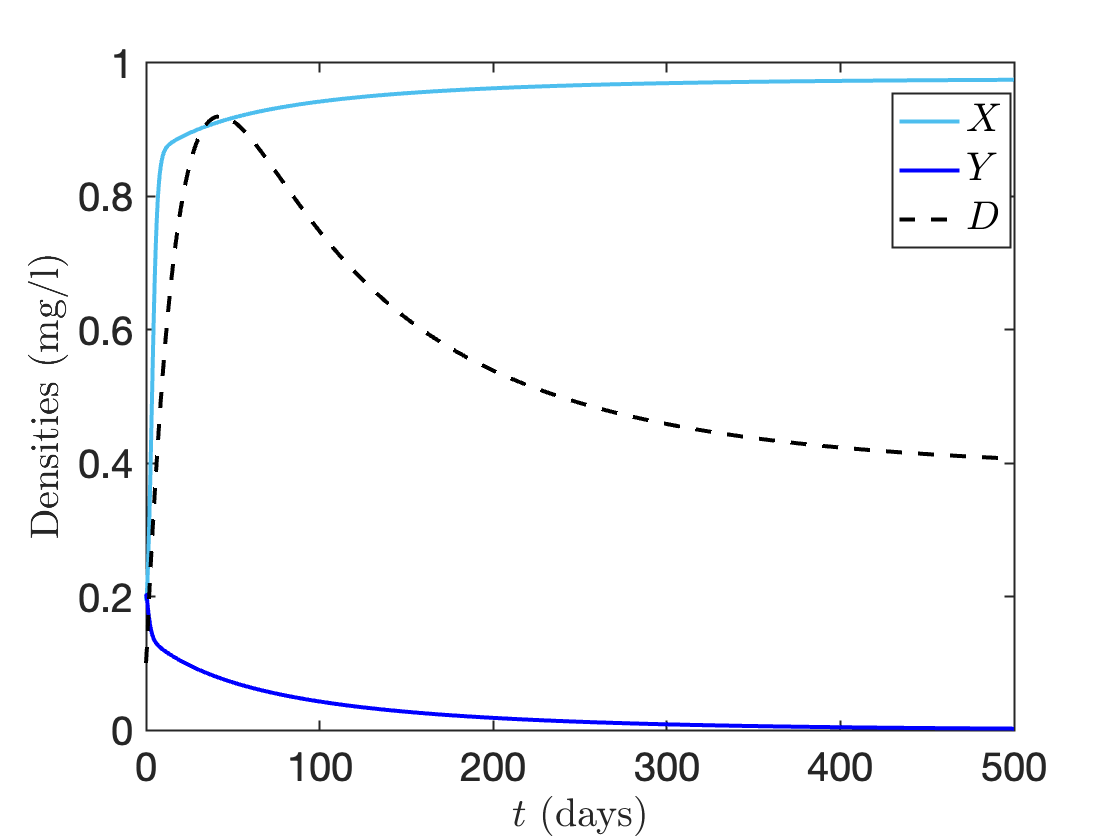}
      \includegraphics[width=0.3\textwidth]{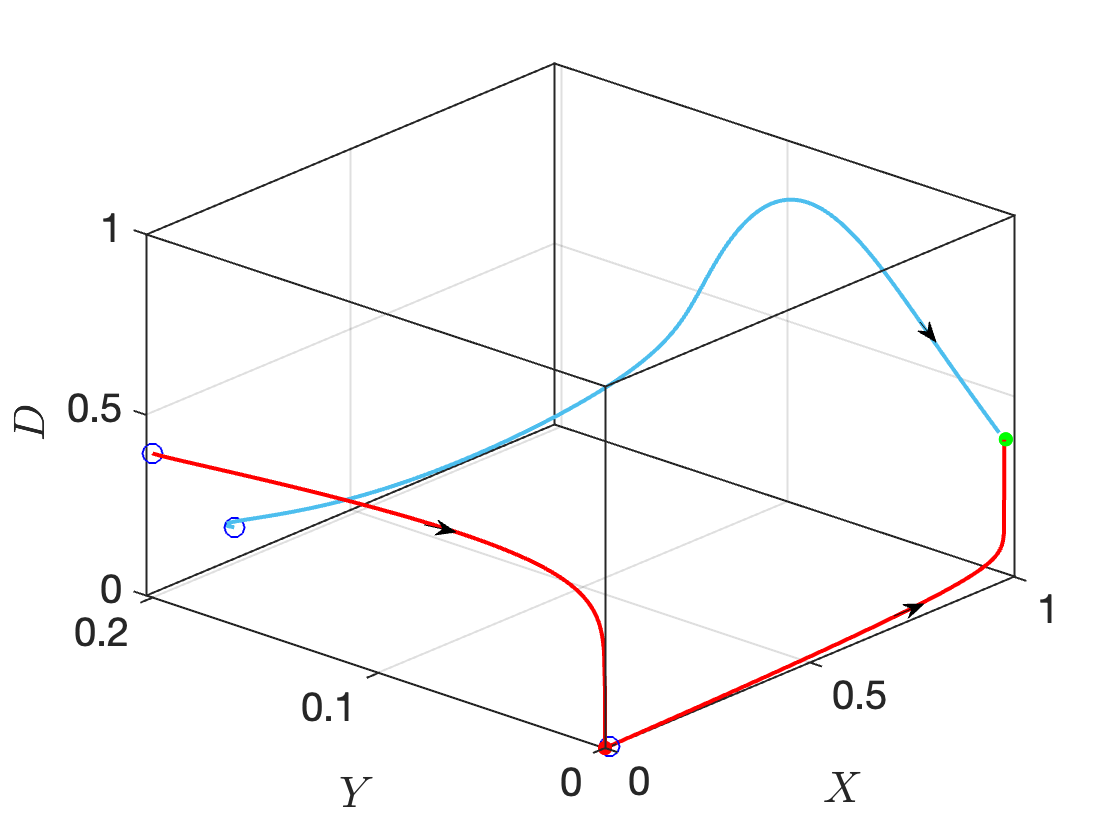}
      }

    \caption{Time series (left panel) and the trajectory (right panel) of the system \eqref{Eq:Model} for different $M_{\mathrm{in}}$. The stable limit cycle (magenta), stable equilibrium points (green), and unstable equilibria (red dots) are shown. The initial conditions (blue dots) and the trajectory (red) converging to $E_0$ and $E_1$ are presented. The parameter values are fixed at Table.~\eqref{tab:Model_para_table_aqua}.}
    \label{fig:Time_Series}
\end{figure}

\begin{figure}[ht!]
\centering
   \subfigure[]{\includegraphics[width=0.35\textwidth]{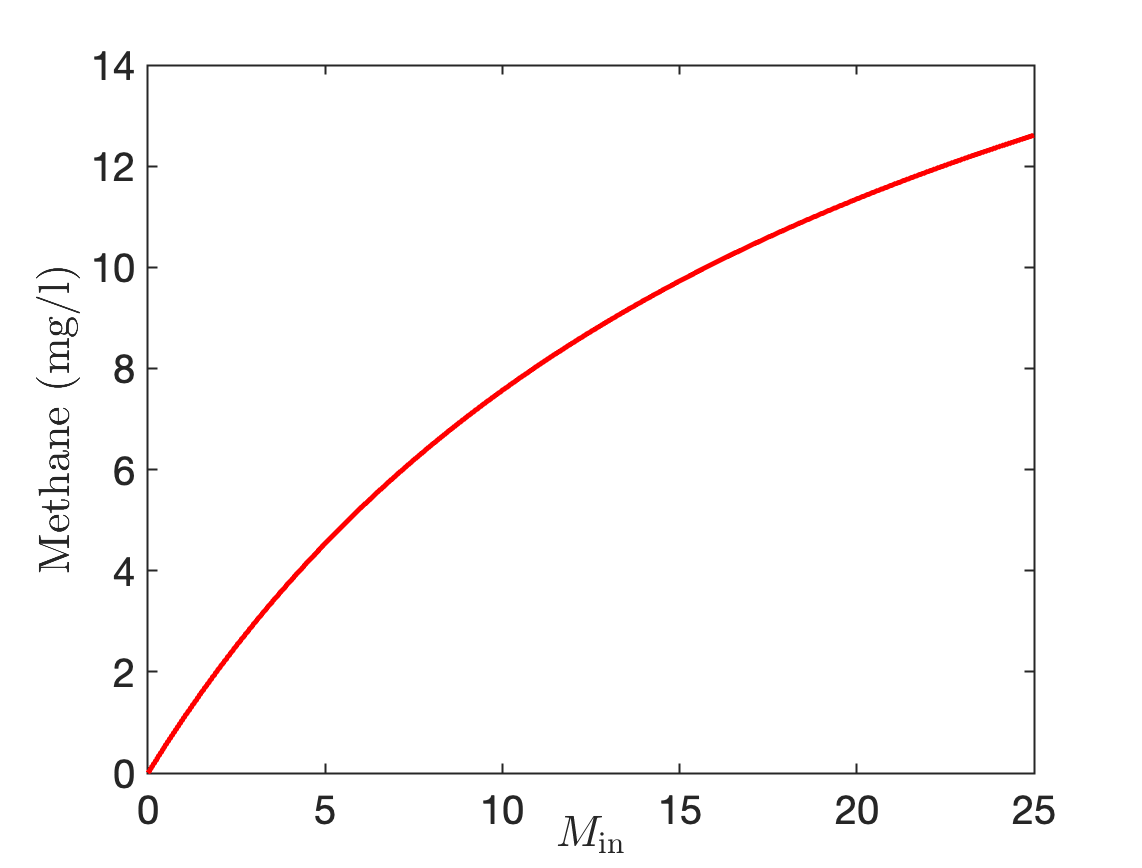}}
    \subfigure[]{\includegraphics[width=0.35\textwidth]{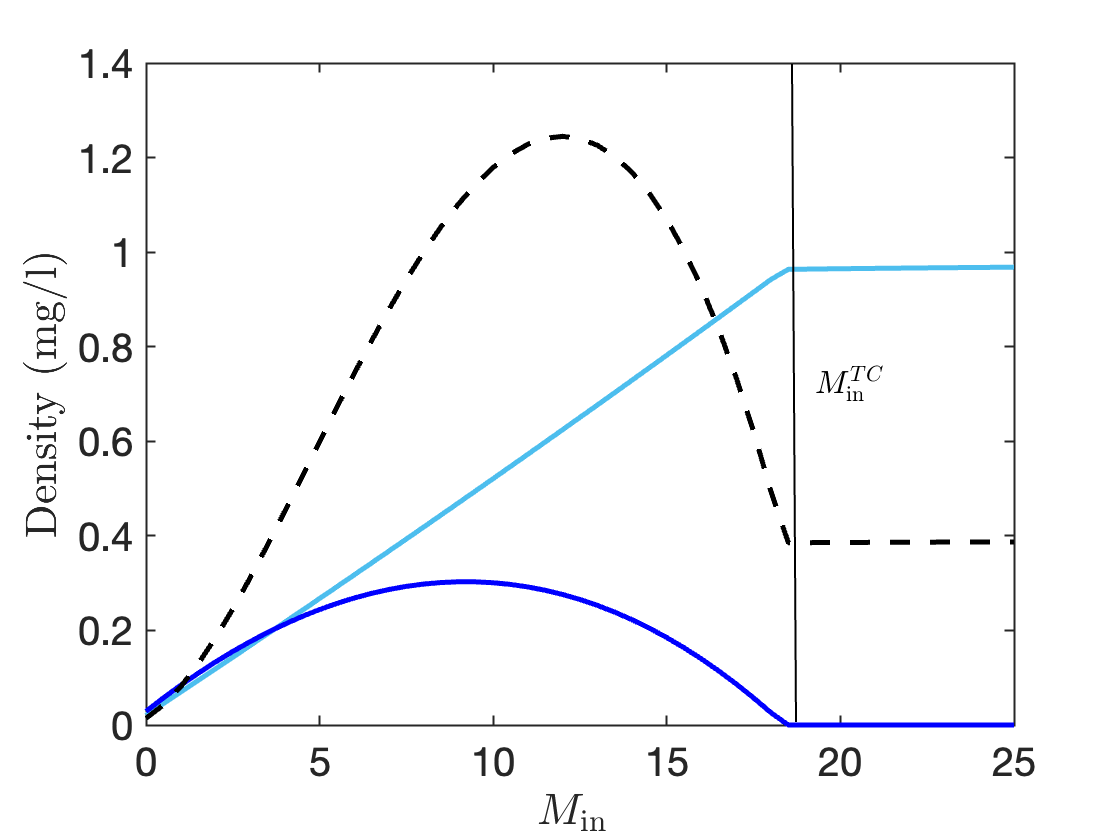}}
      \caption{The change in (a) methane concentration and (b) stable densities of algae (cyan), daphnia (blue) and detritus (black dotted) for varying rate of input $M_{\mathrm{in}}.$  $M_{\mathrm{in}}^{TC}$ represent the threshold for transcritical bifurcation. The other parameter values are fixed at Table.~\eqref{tab:Model_para_table_aqua}.}
    \label{fig:Bifurcation_Min_0_steady}
\end{figure}

\begin{figure}[ht!]
    \centering
   \subfigure[]{\includegraphics[width=0.35\textwidth]{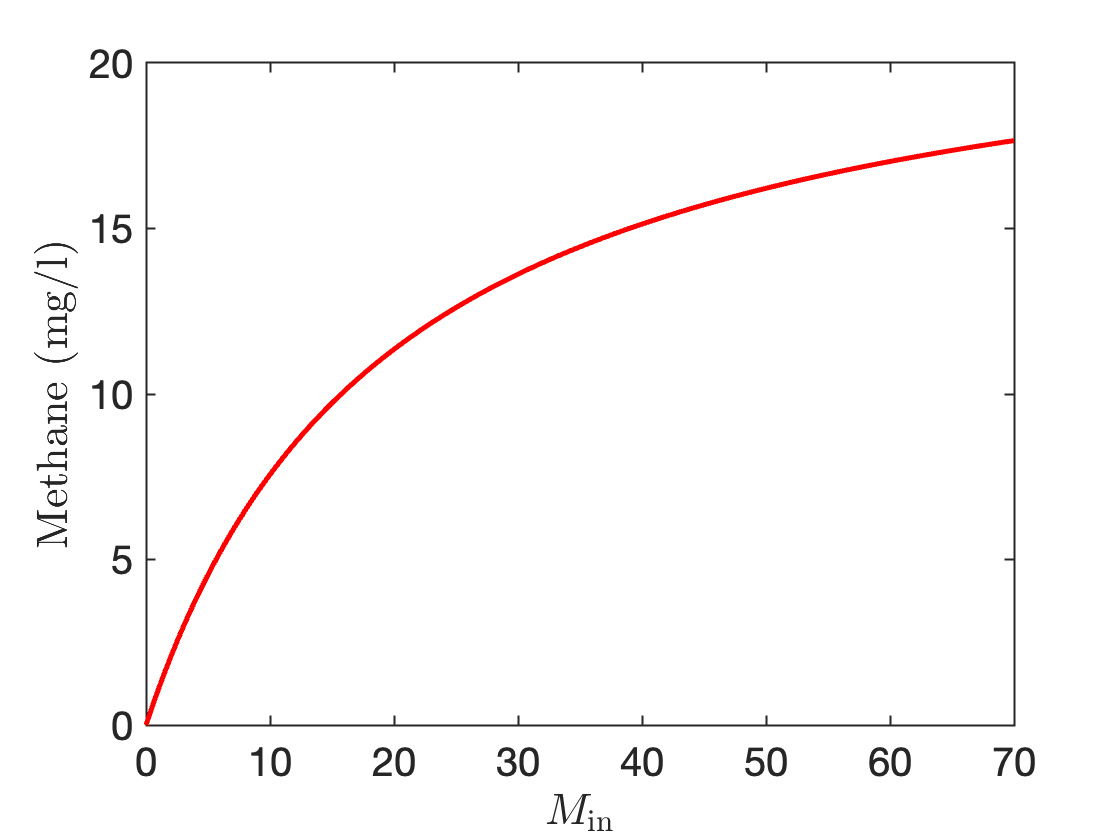}}
    \subfigure[]{\includegraphics[width=0.35\textwidth]{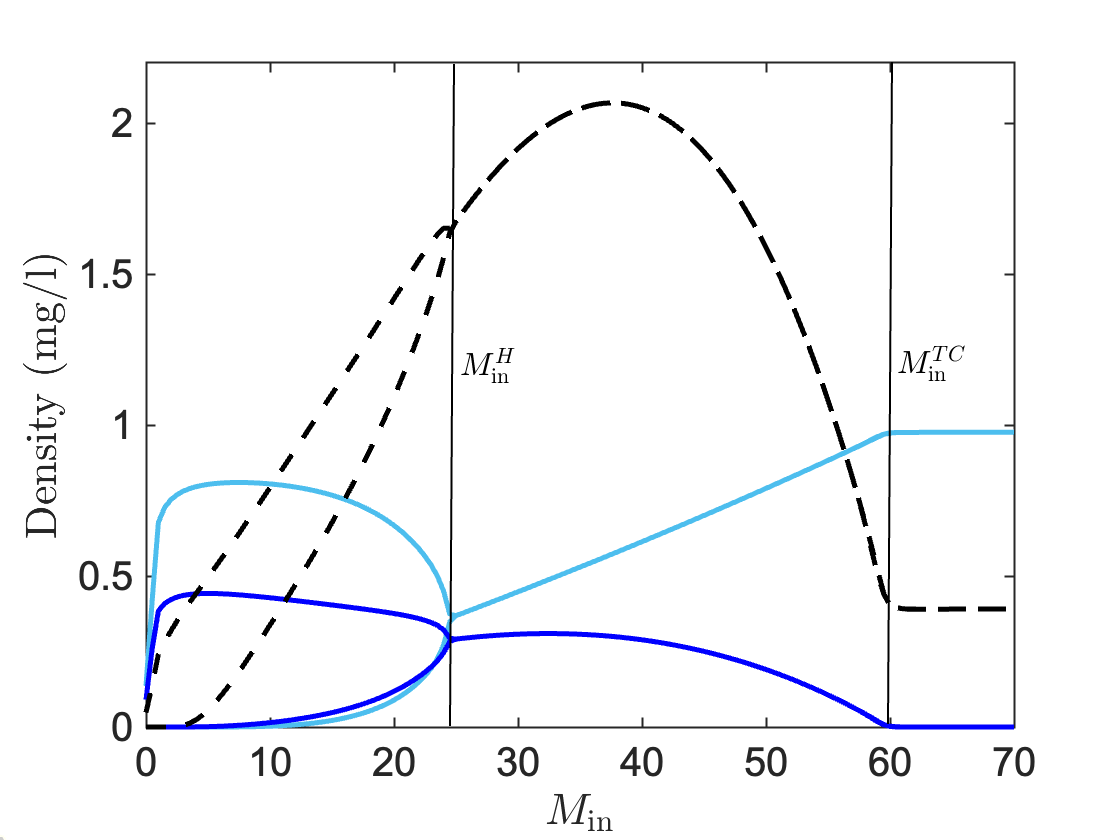}}
      \caption{The change in (a) methane concentration and (b)  densities of algae (cyan), daphnia (blue) and detritus (black dotted). Here, $M_{\mathrm{in}}^H$ and $M_{\mathrm{in}}^{TC}$ represent the thresholds for Hopf and transcritical bifurcation, respectively. The other parameter values are fixed at Table.~\eqref{tab:Model_para_table_aqua}.}
    \label{fig:Bifurcation_Min}
\end{figure}

At a low rate of methane input ($M_{\mathrm{in}} = 5$), the densities of algae and daphnia fluctuate between high and very low levels, spending significant time at low densities (cf. Fig.~\ref{fig:Time_Series}a). With \(M_{\mathrm{in}} = 15\), oscillations persist but with reduced amplitude (cf. Fig.~\ref{fig:Time_Series}b). The higher mean density of detritus reflects the increased mortality from elevated methane levels. As \(M_{\mathrm{in}}\) increases to 30, the unstable equilibrium stabilizes through enhanced algae growth, which temporarily supports daphnia populations (cf. Fig.~\ref{fig:Time_Series}c). However, at high methane input (\(M_{\mathrm{in}} = 65\)), algae reach their maximum capacity, while elevated methane levels simultaneously stress daphnia, suppressing reproduction and ultimately causing mortality (cf. Fig.~\ref{fig:Time_Series}d).

     


The bifurcation diagram (Fig.~\ref{fig:Bifurcation_Min}) further illustrates these dynamics. As methane input (\(M_{\mathrm{in}}\)) increases, the methane concentration in the water also rises (cf. Fig.~\ref{fig:Bifurcation_Min}(a)). At \(M_{\mathrm{in}} = 0\), the system exhibits a stable limit cycle, with the amplitude initially increasing for low \(M_{\mathrm{in}}\) values. The amplitude then decreases until \(M_{\mathrm{in}}^{H} = 24.5\), where a Hopf bifurcation stabilizes the system into a steady state with a declining daphnia population, although algal density continues to grow. A further critical transition occurs at \(M_{\mathrm{in}}^{TC} = 60\), where the zooplankton-free equilibrium is the sole stable equilibrium. While algal growth remains abundant, daphnia fails to graze and survive due to reduced movement, underscoring the chronic toxicity effects of high methane levels.

For a higher half-saturation constant (\(\beta = 0.9\)), predators require more resources to reach their growth potential. Under these conditions, the system remains stable across varying \(M_{\mathrm{in}}\) without periodic oscillations. The gradual changes in plankton population densities as \(M_{\mathrm{in}}\) increases are depicted in the bifurcation plot (Fig.~\ref{fig:Bifurcation_Min_0_steady}). At low to intermediate \(M_{\mathrm{in}}\) levels, daphnia experience positive feedback, with their density increasing due to ample food supply. However, as \(M_{\mathrm{in}}\) rises further, daphnia density starts to decline. At the critical threshold \(M_{\mathrm{in}}^{TC} = 18.5\), the system transitions to the stable zooplankton-free equilibrium.

\subsection{Fast methane dynamics induces long transients}

The intermediate subsystem \eqref{Eq:intermediate_model} along with parameter values 
\begin{equation}\label{par:parameter_values_nondimension}
    \begin{aligned}
        \theta = 22.7,\,&\sigma=0.8,\,\rho_1=0.4,\,
  \kappa=0.55,\,\mu_1=0.1,\,
  \mu_2=0.08,\,
  \rho_2=0.55,\\
  &\tau=6,\,\eta=0.01,\,
 \delta=40,\,\varepsilon_1=0.1,\,
  \varepsilon_2=0.0001,  
    \end{aligned}
\end{equation}
and $\zeta$ between $0-8,$ reveal critical insights into plankton behaviour in response to changes in methane concentration. We first consider the case where the system stabilizes at a coexistence equilibrium. The concentration of methane in water increases rapidly as a result of external input and the decomposition of organic matter, reaching a peak on a short timescale. Following this initial spike, phytoplankton density begins to grow steadily as methane dynamics slows down. This growth continues until the phytoplankton population reaches its maximum.

\begin{figure}[ht!]
    \centering
    \includegraphics[width=0.45\linewidth]{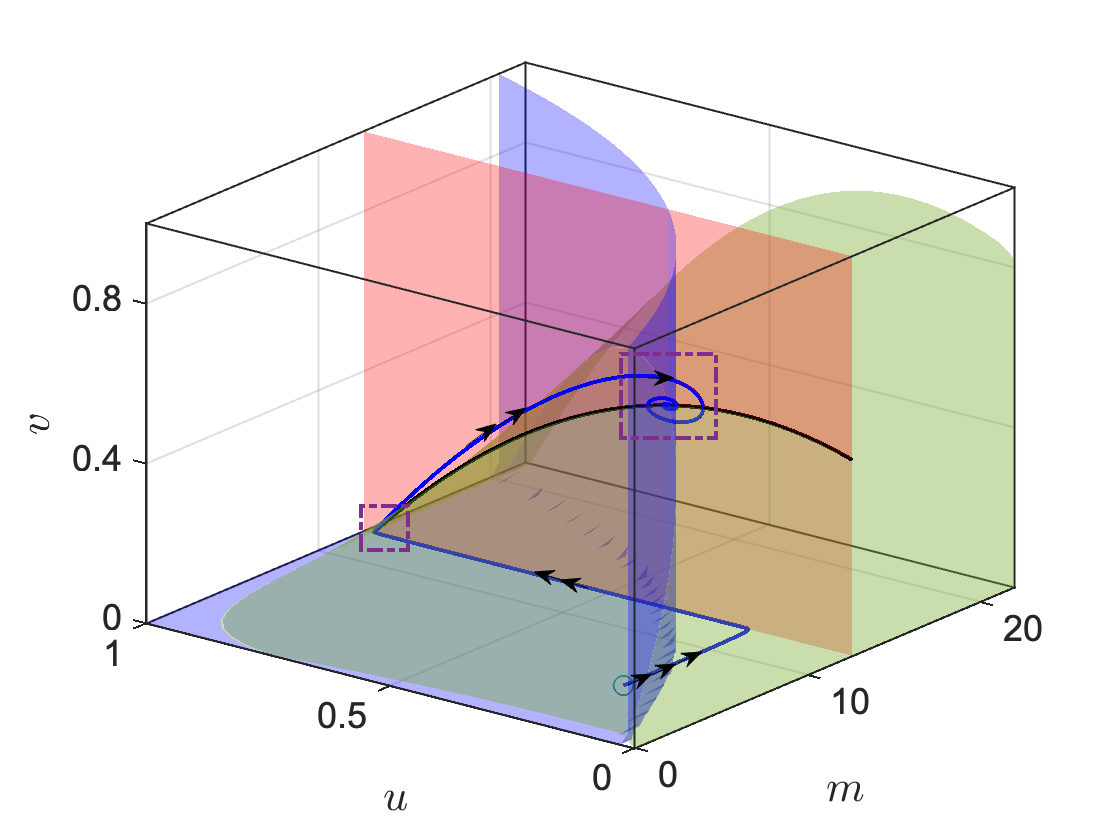}
    \caption{The system trajectory depicting transitions from fast methane dynamics to intermediate plankton dynamics. The critical manifold \(C^1\) in red, the manifold \(F_2(m, u, v) = 0\) in green, and the manifold \(F_3(m, u, v) = 0\) in blue for a fixed \(w_0\). Parameter values are fixed at \eqref{par:parameter_values_nondimension} with \(\zeta = 1\).}
    \label{fig:slow-inter-fast-steady}
\end{figure}

Starting from low zooplankton density, the system tends to approach a temporary equilibrium dominated by phytoplankton. During this phase, zooplankton density remains negligible for an extended period. This phenomenon is known as ``crawl-by transient". This transient highlights how zooplankton populations can experience delayed recovery when methane and phytoplankton dynamics dominate. Eventually, zooplankton density begins to increase as conditions improve, driven by available food resources from phytoplankton growth. However, as the zooplankton population expands, phytoplankton density starts to decline due to grazing pressure. This process unfolds over an intermediate timescale. The entire system gradually transitions toward a stable equilibrium, but the approach slows significantly as the trajectory nears the final steady state. Figure \ref{fig:slow-inter-fast-steady} illustrates these transitions, highlighting the rapid increase in methane concentration, the steady growth of phytoplankton, and the delayed recovery of zooplankton populations. 

The timescales of these processes depend on the relative speed of methane dynamics compared to plankton growth. For example, when methane dynamics are faster (smaller \(\varepsilon_1\)), zooplankton recovery is further delayed, extending the crawl-by transient \cite{Hastings18,Morozov_transient} period near the zooplankton-free equilibrium. Figure \ref{fig:time_Series_intermediate} shows how this delay manifests in the time series for different methane dynamics. Under different environmental conditions, the system exhibits oscillatory behaviour with a large amplitude, even at very low methane concentrations. These oscillations arise from alternating phases of gradual plankton decline and rapid growth. During the decline phase (slow flow), plankton densities decrease steadily and remain near negligible levels for an extended period. This is followed by a sudden resurgence of phytoplankton, driven by intermediate dynamics, which subsequently supports a rapid increase in zooplankton density. This pattern repeats, creating a dynamic balance between the two populations. Figure \ref{fig:slow-inter-fast-limitcycle}(a) illustrates such oscillatory behaviour. At high methane concentrations (\(m > 20\)), the system transitions to a zooplankton-free state, where algal densities remain high, but zooplankton populations collapse due to adverse conditions. This outcome highlights the sensitivity of zooplankton to elevated methane levels, which can disrupt the ecological balance (Fig.~\ref{fig:slow-inter-fast-limitcycle}(b)).

\begin{figure}[ht!]
    \centering
    \includegraphics[width=10cm, height=8cm]{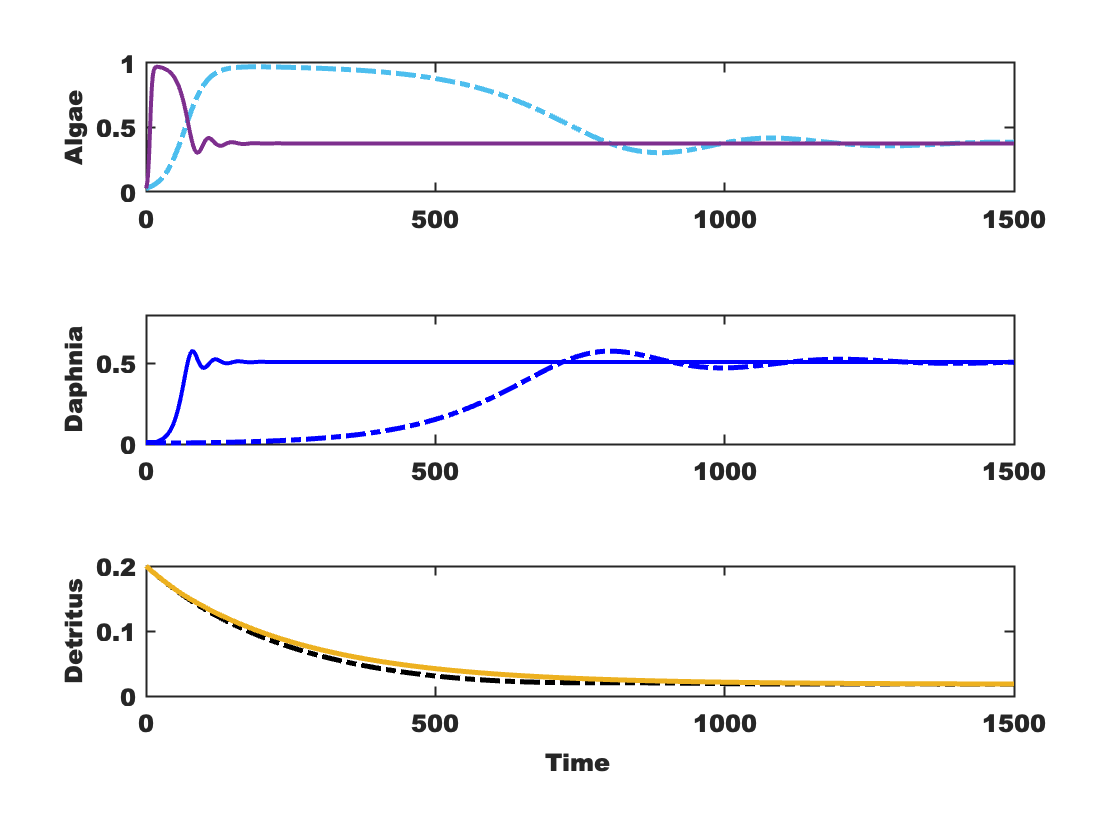}
    \caption{Time series of the trajectory for \(\varepsilon_1 = 0.1\) (solid curve) and \(\varepsilon_1 = 0.01\) (dashed curve), showing the prolonged crawl-by transient near the zooplankton-free state before approaching the steady state.}
    \label{fig:time_Series_intermediate}
\end{figure}

\begin{figure}[ht!]
    \centering
    \subfigure[]{\includegraphics[width=0.45\linewidth]{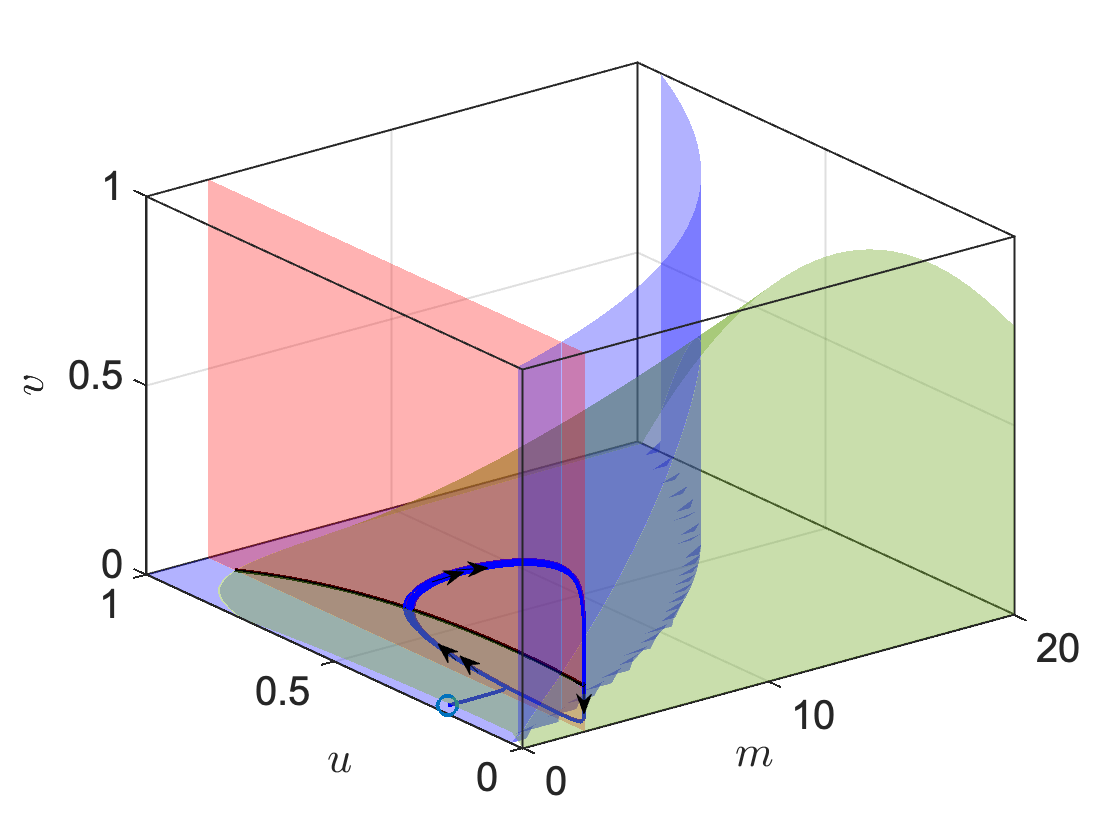}}
    \subfigure[]{\includegraphics[width=0.45\linewidth]{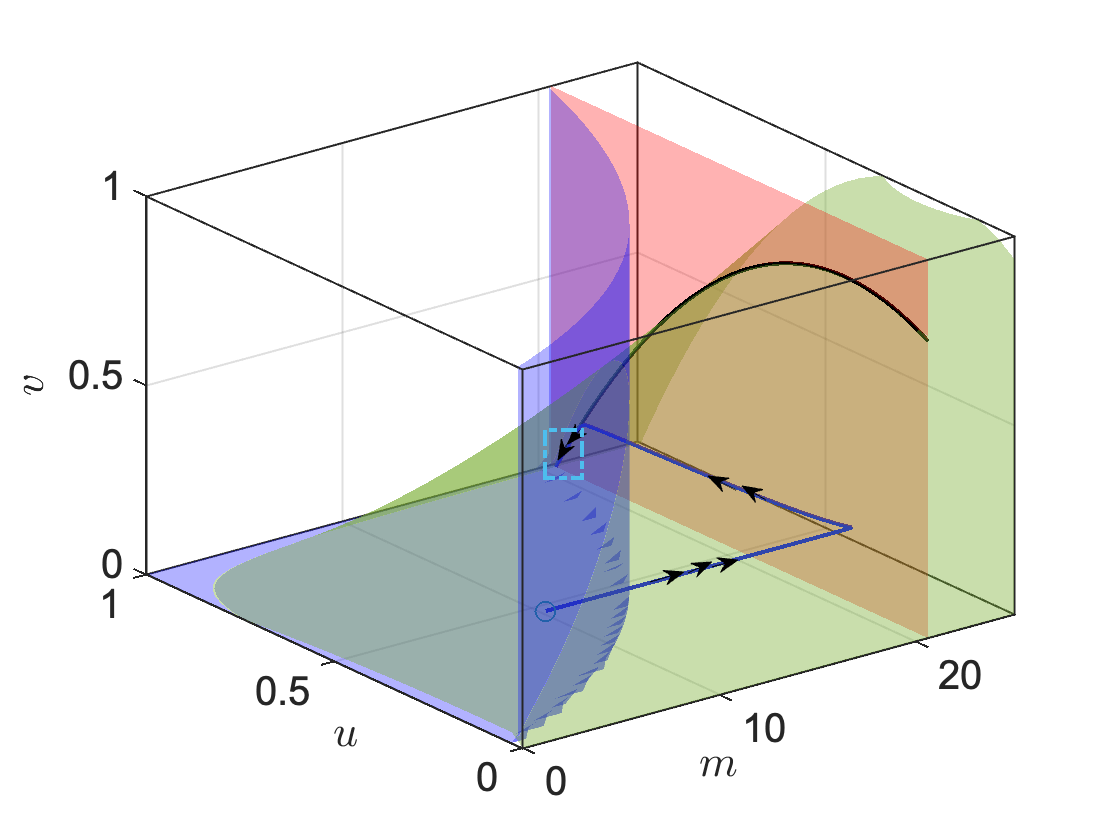}}
    \caption{(a) Intermediate-slow limit cycle in \(muv\) space. (b) System trajectory showing fast and intermediate transitions to the zooplankton-free state. Single, double, and triple arrows represent slow, intermediate, and fast flows, respectively.}
    \label{fig:slow-inter-fast-limitcycle}
\end{figure}

\subsection{Regime shift in terrestrial ecosystem}

The projections of the future atmospheric methane depend solely on the assumptions on the methane emissions from all the natural and man-made sources. However, the direct impact of the rise in methane concentration on terrestrial species has not yet been explored at full capacity. In this section, we provide a conceptual outlook on the variation of the species' density with the rise in the future methane concentration. As per the Shared Socioeconomic Pathway (SSP) scenarios based on climate modelling framework \cite{Kleinen}, the future methane predictions range between 1-8.5 ppm. Through our conceptual model, we capture this variation to observe the change in species density. The system \eqref{Eq:terrestrial_eq} is studied with parameter values fixed at \eqref{tab:Model_para_table_aqua} except $M_{\mathrm{out}}=0.00029,\,e_1=0.1,p=0.02,\,\gamma_M=0.4,\,\beta=0.76,\,e=0.35,\,d_X=0.004,\,d_Y=0.002,\,d_M=0.02.$ The choice of parameters and units are mentioned in SI \ref{sec:Parameterization}.

Our model shows a non-monotonic variation in the concentration of methane with $M_{\mathrm{in}}$ in the atmosphere. This is presented in the form of bifurcation diagrams in Fig.~\ref{fig:Bi_stable_dynamics}. The result shows that with an increase in $M_{\mathrm{in}},$ the model can encounter a bi-stability within a narrow region. The saddle-node bifurcation thresholds ($M_{\mathrm{in}}^{SN1}$ and $M_{\mathrm{in}}^{SN2}$) represent possible sudden shifts in the system depending on the parameter $M_{\mathrm{in}}.$
When $M_{\mathrm{in}}^{SN1}<M<M_{\mathrm{in}}^{SN2},$ the concentration of atmospheric methane ranges from $1.375<M<2.55$ and $M>4.89,$ respectively. We argue that within the range $1.375<M<2.55,$ the system shows an increase in the abundance of both the primary producers and consumers. This might create a positive illusion. However, with a further increase in the value of $M_{\mathrm{in}},$ if the parameter is pushed beyond the threshold $M_{\mathrm{in}}^{SN2},$ the system encounters an irreversible critical transition where the system dynamics shifts to an alternative steady state characterized by higher methane concentrations. This might increase the abundance of both the resource and consumer for a short time. However, if the system stays at this stable branch, with a further increase in methane concentration, the density of the consumer declines. The system encounters another transition point, the transcritical bifurcation, which marks the disappearance of the stable co-existence equilibrium. Beyond this critical threshold, the density of the consumer collapses and eventually leads to extinction.

\begin{figure}[ht!]
    \centering
    \subfigure[]{
    \includegraphics[width=0.45\linewidth]{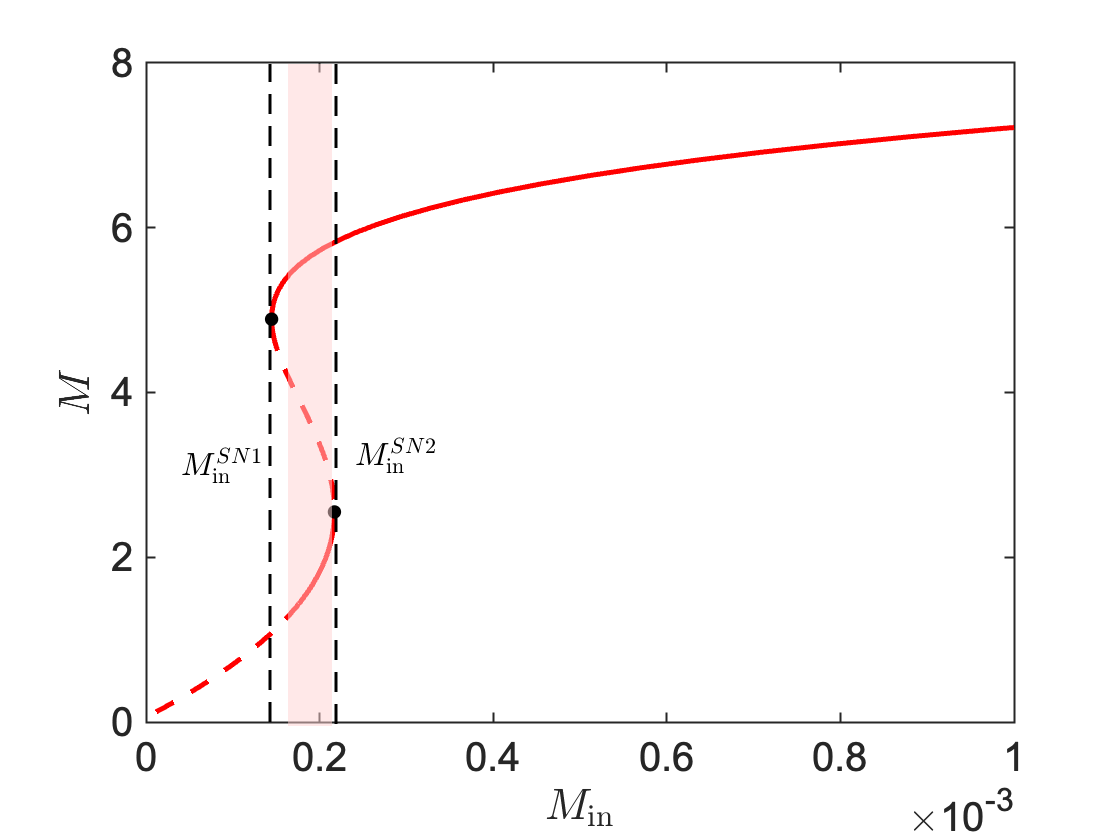}}
    \subfigure[]{
    \includegraphics[width=0.45\linewidth]{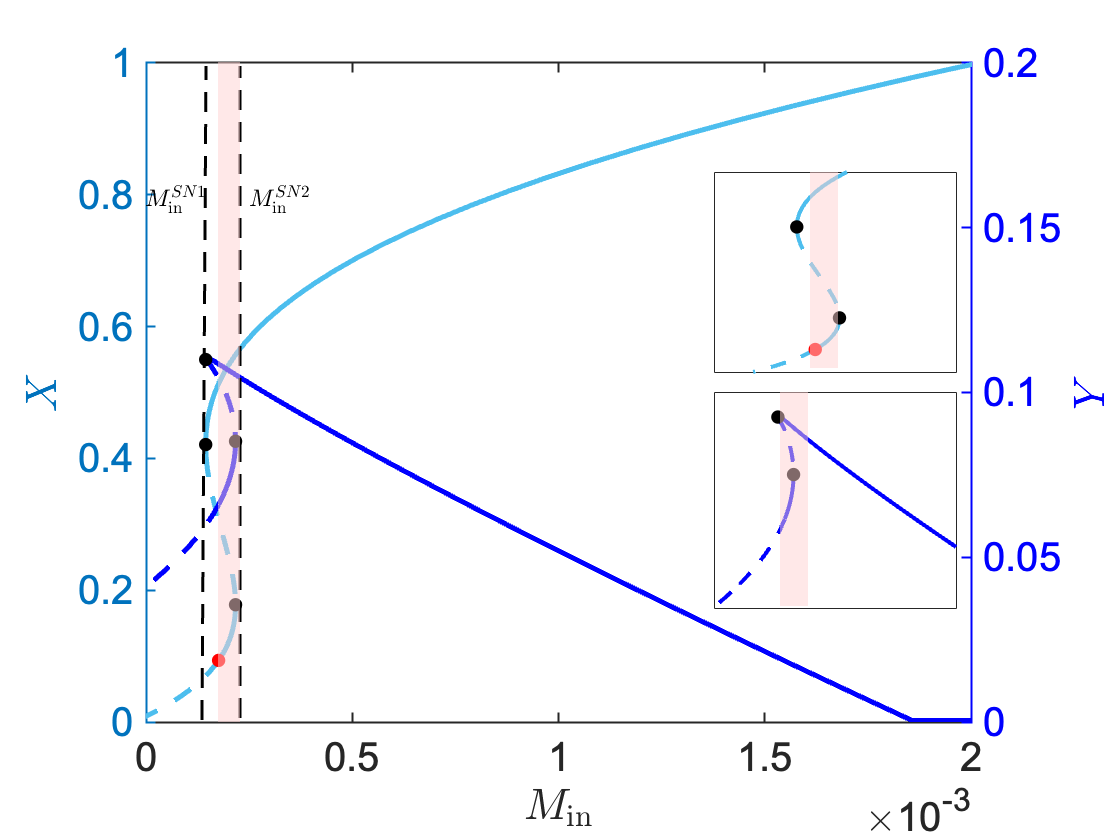}}
    \caption{(a) Change in methane concentration and (b) population densities for varying rates $M_{\mathrm{in}}$ showing the existence of multiple steady states within a narrow region.}
    \label{fig:Bi_stable_dynamics}
\end{figure}


\section{Discussion and Conclusion }\label{sec:Discussion}

Methane (\(\mathrm{CH}_4\)) is the second most significant greenhouse gas after carbon dioxide, playing a critical role in driving climate change \cite{howarth2011methane}. Major contributors of methane emission include the oil and gas industries, wetlands, landfills, and agricultural activities \cite{karakurt2012sources,Patin}. One of the most pressing consequences of rising methane levels is the global increase in surface temperatures \cite{gilbert2015climate}.  
Methane concentrations in the atmosphere have increased to two-and-a-half times their pre-industrial levels. Given the critical role of methane in climate dynamics, it is imperative to understand the factors and processes influenced by this upward trend. Addressing this knowledge gap is essential to effectively mitigate and manage its consequences. 

We developed a methane-resource-consumer-detritus model to offer a conceptual understanding of how increasing methane levels impact trophic dynamics and population structures in ecosystems. As an example, we primarily examined two ecosystems: terrestrial and aquatic. However, the framework can be extended to other ecosystems. Temperature is incorporated as a key factor influencing the growth and development of both producers and consumers, with growth peaking near their optimal temperature. Our findings reveal that species become increasingly vulnerable to rising temperatures in the presence of methane. Higher concentrations of methane significantly narrows the temperature range within which species can coexist. This suggests that as methane levels increase, the window for their survival near their optimal temperature diminishes. While primary producers may thrive at elevated temperatures, consumers face the risk of extinction due to environmental stress (cf. Fig.~\ref{fig:Two_parameter_T_Min}).

Through our analysis, we argue that methane can positively influence system dynamics by enhancing species growth, but only up to a certain threshold. For example, the densities of algae and Daphnia increase with rising methane input up to a critical limit. Biologically, this may create the illusion that methane contributes positively to ecosystem functioning. However, our results indicate that beyond a dissolved methane concentration of 4 mg/L, the population density of Daphnia declines, showing the lethal effects of elevated methane levels (cf. Fig.~\ref{fig:Bifurcation_Min_0_steady}). At high concentrations, methane-induced stress imposes several adverse effects on Daphnia, including reduced mobility for foraging, decreased hatching rates, impaired development, and increased mortality due to oxygen deficiency. These factors collectively weaken the population's resilience and long-term sustainability. Notably, our theoretical findings align with existing literature \cite{SafetyDataSheetSweden}, which identifies a chronic toxicity threshold for methane in Daphnia at approximately 3.88 mg/L.


Methane exerts a stabilizing effect on the system's dynamics (cf. Fig.~\ref{fig:Bifurcation_Min}). In the absence of methane, the resource-consumer system exhibits periodic oscillatory dynamics: when resource levels are high, consumer populations increase, which subsequently depletes the resource, leading to a decline in consumer populations. This cyclical pattern persists, resulting in continuous fluctuations. However, as methane levels increase, the oscillations diminish, and the system stabilizes, converging to a steady state where both resource and consumer densities remain constant. Further increase in methane, ultimately leading to the extinction of zooplankton. While toxin has a stabilizing effect on the dynamics of the system \eqref{Eq:Model}-\eqref{Eq:aquatic_eq}, it induces bistability in the terrestrial model \eqref{Eq:terrestrial_eq}. With the increasing trend of methane in the atmosphere, the ecosystem counters a catastrophic transition at $M_{\mathrm{in}}^{SN2}$ (cf. Fig.~\ref{fig:Bi_stable_dynamics}). Methane can be a slow threat to the ecosystem. 
Near this threshold, a slight increase in $M_{\mathrm{in}}$ shifts the system to an alternate steady state with elevated methane levels. At this stage, we observe an increase in resource density, whereas the density of consumers declines, eventually leading to extinction. The regime shifts in ecosystems due to the ongoing climate change have been a focus of research in ecology over the last few decades \cite{Scheffer,Petrovskii17}, but have rarely been explored with reference to toxin effect with an exception to few.


Section \ref{sec:Multiple timescales framework of the model} highlights the distinct timescales involved in the methane-organism-detritus interactions and their implications in real-world ecosystems. 
This approach effectively captures the transient behaviours within the aquatic ecosystem and can be extended to analyze terrestrial ecosystems as well.
Methane dynamics operate on a fast timescale, achieving equilibrium in the aquatic environment rapidly. In contrast, the interactions between plankton species evolve over an intermediate timescale. Meanwhile, the decomposition of dead organic matter occurs on a slow timescale, contributing gradually to methane production in the aquatic environment. 
Our results reveal that faster the accumulation of methane in water, longer is the duration of the crawl-by transient (cf. Fig~\ref{fig:time_Series_intermediate}), implying that the system requires a long time to recover from the zooplankton-free state, depicting local extinction. While low levels of methane may exhibit a transient positive effect on system dynamics, this process unfolds over an extended period. Moderate levels of dissolved methane, although not immediately lethal to Daphnia, can induce sub-lethal effects over time, impairing their survival and functional performance. These highlight the complex and time-dependent nature of methane's influence on ecosystem dynamics.

This study provides a mathematical framework as an initial step toward understanding methane's impact on ecosystems. To the best of our knowledge, no prior studies have been done in this area. However, further experimental and theoretical investigations are needed to unravel the complexities of species interactions influenced by rising methane levels. This research leaves many future research questions, including the potential behavioural adaptations and responses of species to changing climates. Investigating the distribution of species near the areas of high-emission zones. Further research in this direction is essential to assess methane's role as an environmental stressor and its broader implications for ecosystem stability and functioning.

\section*{Conflict of Interest}

The authors do not have any conflict of interest.

\section*{Acknowledgements}
The authors would like to acknowledge the Natural
Sciences and Engineering Research Council of Canada (NSERC) for the funding through NSERC
Alliance Missions grant (NSERC-AMG-577242) on anthropogenic greenhouse gas research.
H.W.'s research was partially supported by the Natural Sciences and Engineering Research Council of Canada (Individual Discovery Grant RGPIN-2020-03911 and Discovery Accelerator Supplement Grant RGPAS-2020-00090) and the Canada Research Chairs Program (Tier 1 Canada Research Chair Award).

\bibliographystyle{apalike2}


\clearpage
 {\LARGE   
\begin{center}
Supplementary Information for ``The silent threat of methane to ecosystems: Insights from mechanistic modelling"
\end{center}
 }
\begin{appendix}

\section{Parameter estimation}\label{sec:Parameterization}


\subsection{Optimal temperature }
Temperature is a critical factor in the functioning of biological species. Studies indicate that blue-green algae thrive in warmer water temperatures 
Consequently, we focus on the summer months to study plankton dynamics using our model. Data from the Lake Winnipeg shows a summer temperature gradient of \(15-25 \degree\)C \cite{LakeWinipeg}, while analyses of European and North American lakes report critical summer surface temperatures ranging from \(15.5-25 \degree\)C \footnote[1]{https://www.epa.gov/climate-indicators/great-lakes}. Thus, we consider the lake temperature \(T\) in our model to range between \(15-25 \degree\)C.

The optimal photosynthesis temperature for blue-green algae growth is reported as \(20-30 \degree\)C, with some studies suggesting a broader range of \(15-30 \degree\)C \cite{Singh15}. For Daphnia magna, the maximum growth rate occurs at \(20 \pm 2 \degree\)C \cite{Giebelhausen}. Considering these findings, we set the optimal temperatures in our model to \(T_X = 25 \degree\)C for algae and \(T_Y = 20 \degree\)C for daphnia.

\subsection{Methane saturation in water}
Methane reaches saturation in water at 22.7 mg/L under standard atmospheric pressure (1 atm) and temperature. According to Henry's law, the solubility of a gas in a liquid is proportional to the pressure \cite{Henry}, that is  
\[
\frac{\text{Solubility of gas at } P_1}{P_1} = \frac{\text{Solubility of gas at } P_2}{P_2}.
\]  

At depths of 0–10 m in a water body, the pressure is approximately 1 atm. Using Henry's law, the solubility of methane at this pressure is 22.7 mg/L. Therefore, we approximate \( \Bar{M} \) as 22.7 mg/L.

\subsection{Estimating \texorpdfstring{$\gamma_M$}{g}}

Methane indirectly affects algal growth through its interaction with methane-oxidizing bacteria. During methane oxidation, one mole of \(\mathrm{CH_4}\) reacts with two moles of $\mathrm{O_2}$ to produce \(\mathrm{CO_2}\). Experimental data on the direct influence of methane on algal growth is limited. A 20-day experiment in \cite{Enebo} demonstrated this effect, with data showing algal growth with and without methane (Fig.~\ref{fig:data_algae}). From this data, the growth rate due to methane was estimated between \(0.48\) and \(0.66\) mg/L per day. 

In the experiment, \(\mathrm{NaHCO_3}\) and \(\mathrm{CH_4}\) were added as carbon sources, with 50 mg/L of methane added daily and concentrations maintained at 65–70 mg/L. Using this data, the parameter \(\gamma_M\) is estimated as:  
\[
\gamma_M \approx \frac{\frac{\text{growth with methane}}{\text{growth without methane}}}{\text{methane concentration (mg/L)}} \approx 0.009\, \mathrm{mg^{-1} L}.
\]

However, experimental data on the direct influence of methane is scarce, and its effects vary across algal species and environmental conditions. To account for this variability, we consider a wide range of \(\gamma_M\) values in our model. Data from Fig. 2 in \cite{Hadiyanto} shows a linear relationship between \(\mathrm{CO_2}\) flow and microalgal growth (\textit{Chlamydomonas}), suggesting a maximum \(\gamma_M\) value of approximately 1 \(\mathrm{mg^{-1} L}\). Besides algae, it has been shown recently that methane can help in the growth and development of plants including seed germination, seedlings growth, and lateral rooting \cite{Wang}.
 Therefore, we use \( 0.009 \leq \gamma_M \leq 1 \) for our model.

\begin{figure}
    \centering
    \includegraphics[width=7cm,height=6cm]{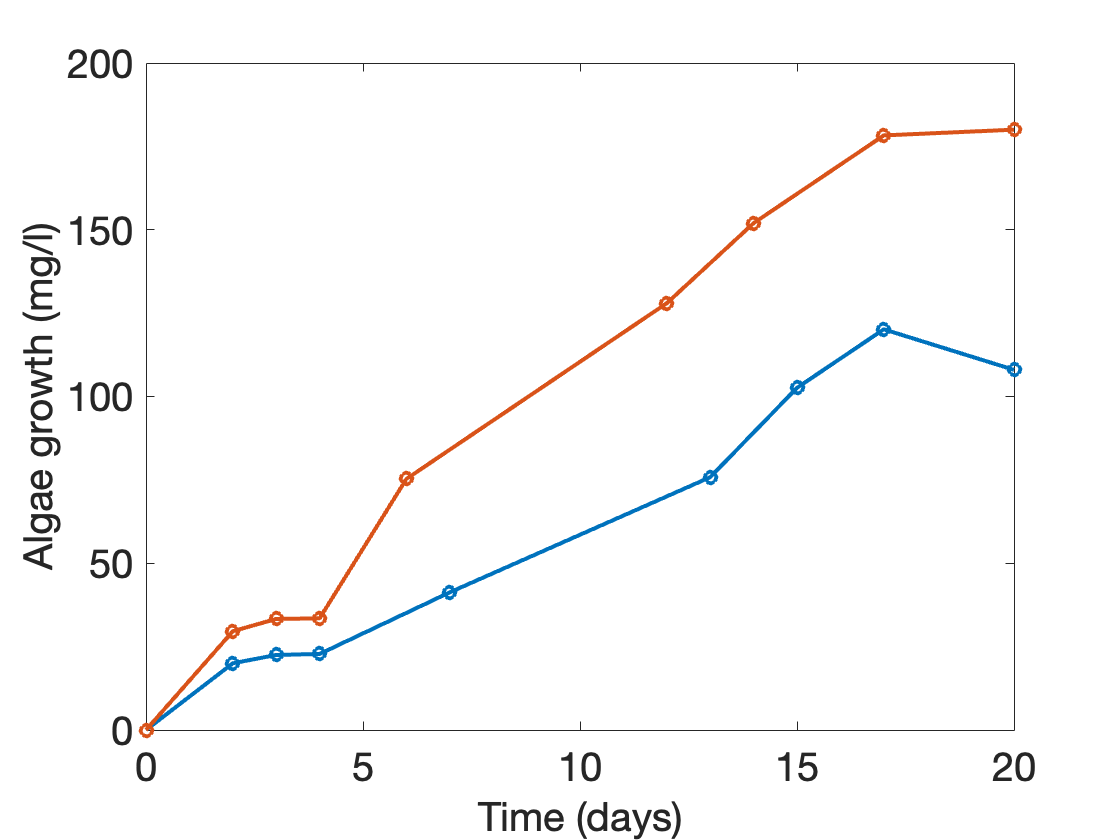}
    \caption{Data from \cite{Enebo} showing algal growth with methane (brown) and without methane (blue).}
    \label{fig:data_algae}
\end{figure}

\subsection{Estimating \texorpdfstring{$e$}{e}}
The maximum growth rate of daphnia in the presence of algae at an optimal temperature is \(0.3\, \mathrm{day}^{-1}\) \cite{Kankaala06}, corresponding to  
\[
\frac{e}{1+\gamma_1(T-T_Y)^2}\frac{\alpha X}{\beta+X} = 0.3.
\]  
At the optimal temperature \(T_Y = 20^\circ \mathrm{C}\) and \(X \to \infty\), this simplifies to \(e \alpha = 0.3\). With \(\alpha = 0.8\, \mathrm{day}^{-1}\) \cite{Chen17}, we find \(e = 0.375\).
Daphnia growth also depends on the quality of algae. Experiments \cite{DeMott} show juvenile growth rates of \(0.57-0.61\, \mathrm{day}^{-1}\) on phosphorous-rich algae. This gives us an estimate of $0.375<e<0.7625.$

\subsection{Estimating mortality }

Data on the direct toxicity of methane to zooplankton is quite limited. In oxic water layers, methane reduces dissolved oxygen which is necessary for the respiration of zooplankton. At \(25^\circ \mathrm{C}\), their average lifespan is up to 40 days\footnote[2]{https://animaldiversity.org/accounts/Daphnia\_magna/}, while at \(20^\circ \mathrm{C}\), they can live up to 3 months in the lab. This suggests a natural mortality rate \(d_Y \in [0.01, 0.025]\, \mathrm{day^{-1}}\).

The tests performed to measure the toxicity of gas on organisms are LC50 or EC50. LC50 (median lethal concentration) is the concentration of material (in this case, effluent) in water that is estimated to be lethal to $50 \%$ of the test organisms. Whereas, EC50 (median effective concentration) is the concentration of material in water estimated to cause a specified non-lethal or lethal effect in $50\%$ of test organisms. Based on Finland and Borealis, Sweden’s methane safety data sheet \cite{SafetyDataSheetFinland}, the 48-hour EC50 is \(69.4\, \mathrm{mg/L}\) and a 30-day chronic toxicity of \(3.886\, \mathrm{mg/L}\) \cite{SafetyDataSheetSweden}.
Using the Poisson process, we estimate mortality due to methane exposure. The percent mortality after \(t\) days is:  
\[
p_0(t) = 1 - e^{-(d_M  M) t},
\]  
where \(d_M  M\) is the methane-induced mortality. From the LC50 data, we estimate \(d_M \approx 0.0047\, \mathrm{mg^{-1} L^{-1} day^{-1}}\), and from the chronic toxicity data, \(d_M \approx 0.038\, \mathrm{mg^{-1} L^{-1} day^{-1}}\). Thus, we consider a range for \(d_M\) of \(0.0047-0.038\, \mathrm{mg^{-1} L^{-1} day^{-1}}\).

Data on the mortality of terrestrial species due to methane is scarce, as many species tolerate methane well and may also emit it metabolically. Smaller species at the base of the food chain, however, are more vulnerable due to their small body size. For example, the LC50 for rats is reported to exceed 50,000 ppm via direct inhalation \cite{SafetyDataRat}. To estimate the toxic effects of methane on smaller terrestrial species, we refer to allometric scaling \cite{Farrar}, which relates toxicity to body size through a non-linear relationship :
$$
T_2 = T_1\left(\frac{BW_2}{BW_1}\right)^b,
$$
where \(T_2\) is the extrapolated toxicity for Species 2, based on \(T_1\) (toxicity for Species 1), \(BW_1, BW_2\) are their respective body weights, and $b$ is the allometric scaling exponent. Based on this, we can approximately estimate methane-induced mortality (\(d_M\)) for various species. For our model, we assume a class of species with a body weight of less than 1 gram, for instance, the flies or bumblebees. Our choice of bumblebees is justified because it has been found that bumblebees are attracted to the methane leaks in the pipes, leading to death. Thus, assuming a weight of $0.1$g and using the above allometric scaling ($b=0.67)$, we estimate the lethal dose to be approximately 1044 ppm. Further, to incorporate the uncertainty factors while calculating the lethal toxicity, inter- and intra-specific adjustments are made \cite{Rusyn22}. A default value of 100 is used as proposed by WHO to account for the uncertainties in the extrapolation. Therefore, considering the uncertainty, the adjusted lethal dose is 10.44 ppm.  Using the Poisson process as before, we roughly estimate $0.008 < d_M < 0.79.$ Due to the unavailability of specific data, we rely mostly on the order of magnitude of estimates rather than the actual estimate. This is justified with our research question as rather than quantifying the direct effect we wanted to understand the direct effect through our modeling approach conceptually.

\subsection{Estimating \texorpdfstring{$M_{\mathrm{out}}$}{Mout}}

Methane loss occurs through the uptake of species, oxidation in oxic water or in the environment, and emission from the water surface to the atmosphere. In the aquatic ecosystem, due to the small size of daphnia and the lack of methane accumulation in organisms, methane uptake by daphnia is negligible and omitted from our model. The dominant loss is methane emission from the water surface to the atmosphere, which can be modeled using the air-water gas concentration gradient and the gas transfer velocity (\(k\)), given by \cite{Pajala23}
\[
F = k(C_w - C_{eq}),
\]
where \(F\) is the flux, \(C_w\) is the methane concentration in water, \(C_{eq}\) is the equilibrium concentration with methane partial pressure, and \(k\) is the gas transfer velocity (piston velocity). We estimate $
M_{\mathrm{out}} \approx \frac{k}{\text{depth (m)}}$.
Assuming a depth of 1 m (where ebullition probability is highest at \(0.8\) for depths of 0.5–1 m), data from 4 Swedish lakes show piston velocities ranging from \(0.08\) to \(2\,\mathrm{m\,day^{-1}}\) \cite{Pajala23}. Additional data from North American and Swedish lakes suggest a \(k\) ratio range of \(1\) to \(2.5\) \cite{Bastviken04}. Thus, we estimate \(M_{\mathrm{out}} \in [0.08, 2.5]\,\mathrm{day^{-1}}\). On the other hand, the methane that is being lost from water to the atmosphere gets trapped in the troposphere for a long time. The average lifespan of a methane molecule in the atmosphere is 9.6 years. So $M_{\mathrm{out}}$ = 0.00029 $\mathrm{day}^{-1}.$ We also consider the loss of methane from the troposphere to the stratosphere over a period of 1 year from the data presented in \cite{Ji20}. We further observe that the rate of loss of methane from the troposphere to the stratosphere is in order of $10^{-3}$ day$^{-1}.$

\subsection{Detritus related parameters}

Detritus accumulation results from the death of phytoplankton (\(d_X X\)) and zooplankton (\((d_Y + d_M M)Y\)). Decomposition rates of phytoplankton and plants reported in \cite{Enríquez} range from \(0.0001-0.2\, \mathrm{day^{-1}}\). Thus, we estimate the decomposition rate as \(p \in [0.0001, 0.2]\, \mathrm{day^{-1}}\).

Methane accumulation in detritus occurs along density gradients, where planktonic detritus can persist for weeks 
Experimental studies \cite{Kankaala} estimate methane emissions as \(4\%-25\%\) of the total net production for certain phytoplankton species, giving \(e_1 \in [0.04, 0.25]\).

\noindent The parameter estimates for the maximum growth rate ($r$), maximum carrying capacity of algae ($K$), the maximum predation rate ($\alpha$) and the half-saturation constant ($\beta$) are well studied in the literature \cite{Lürling,Chen17,Pranali24}. We mention the parameter ranges used in our study along with the references in Table ~\ref{tab:Model_para_table}.

\begin{table}[ht!]\scriptsize
\centering
    \begin{tabular}{ m{1cm}  m{8cm} m{3.5cm} m{1.5cm}} 
    \hline
 Para. & Description & Values &  Reference\\ 
 \hline 
  $T$ & Temperature of the water & 15-25 $\degree$C &\cite{LakeWinipeg} \\ 
  
  $T_X$ & Optimal temperature for the growth of algae & 25 $\degree$C & \cite{Lürling,Singh15}\\  

  $T_Y$ & Optimal temperature for the growth of daphnia & 20 $\degree$C & \cite{Giebelhausen}\\
  
  $c_1$ &  rate of dissolution of methane in water & 0.005-0.45 $\mathrm{mg^{-1} l}$ &  Assumed \\
  
  $\Bar{M}$ & Saturation of methane in water & 22.7 $\mathrm{mg l^{-1}}$& Estimated\\
  
  $M_{\mathrm{in}}$ & The external input of methane & $0-100\, \mathrm{mgl^{-1} day^{-1}}$& Assumed\\
  
  $M_{\mathrm{out}}$ & rate of loss of methane from lake to the atmosphere & $0.08-2.5\,
  \mathrm{day^{-1}}$ & Estimated\\
 
  $e_1$ & proportionality constant for formation of methane from detritus &  0-0.25 no unit & \cite{Kankaala} \\

  $r$ & maximum growth rate of algae & 0.93 $\pm$ 0.12 $\mathrm{day}^{-1}$ & \cite{Lürling}\\

  $\gamma_M$ & influence of methane to the growth of algae& 0.009-1 $\mathrm{mg^{-1} \, l}$ & Estimated\\

  $K$ & maximum carrying capacity & 0-10 $\mathrm{mg \, l}^{-1}$ &\cite{Chen17}\\
  
  $\alpha$ &maximum predation rate of prey& $0.48 - 1$ $\text{day}^{-1}$ & \cite{Chen17}\\
  
  $\beta$ & half-saturation constant& 0.25-0.95 $\mathrm{mg \, l}^{-1}$ &\cite{Chen17,Pranali24}\\
  
  $d_X$ & natural death rate of algae & 0.01-1 day$^{-1}$& \cite{Baker21}\\
  
  $e$ & maximal conversion efficiency of daphnia & 0.375-0.55 no unit & Estimated \\
  
  $d_Y$ & natural death rate of daphnia& 0.01-0.025 $\mathrm{day^{-1}}$ & Estimated \\
  
  $d_M$ & influence of methane on death of daphnia& 0.0047-0.038 $\mathrm{mg^{-1}l\,day^{-1} }$ & Estimated  \\
  
  $p$ & decomposition rate of detritus & 0.0001-0.2 $\mathrm{day^{-1}}$ & Estimated\\

  $\gamma_1$ & temperature related constant for growth of algae & 0-1 $\mathrm{\left(^oC\right)^{-2}}$ & Assumed\\

  $\gamma_2$ & temperature related constant for growth of daphnia & 0-1 $\mathrm{\left(^oC\right)^{-2}}$ &Assumed\\
  
 \hline
\end{tabular}
 \caption{Description of the parameters of the system (1).}
    \label{tab:Model_para_table}
\end{table}
For the simulation of the model (1)-(2), we fix the parameters at
\begin{equation}\label{par:parameter_temp}
    \begin{aligned}  
    &T_X=25,\, T_Y=20,\, c_1=0.05,\, \Bar{M}=22.7,\, M_{\mathrm{out}}=1,\, e_1=0.002,\, p=0.05, \\
    &\gamma_M=1,\, r=1.2,\, K=1,\, \alpha=0.8,\, \beta=0.25,\, e=0.55,\, d_X=0.02,\, d_Y=0.01,\, \\
    &d_M=0.02,\, \gamma_1=1,\, \gamma_2=1,
    \end{aligned}
\end{equation}
unless otherwise stated. For simulating the model (1) with (3) we fix the parameters at 
\begin{equation}\label{par:terres_parameter_temp}
    \begin{aligned}  &T=20,\,T_X=25,\,T_Y=20,  \,M_{\mathrm{out}}=0.00029,\,e_1=0.1,p=0.02,\\ &\gamma_M=0.4,\,
r=1.2,\,K=1,\,\alpha=0.8,\,\beta=0.76,\,e=0.35,\,d_X=0.004,\,\\
&d_Y=0.002,\,d_M=0.02,\gamma_1=1,\,\gamma_2=1,
    \end{aligned}
\end{equation}
and consider $M_{\mathrm{in}}$ as bifurcation parameter. The authors in \cite{Kleinen} studied various scenarios of the future where the atmospheric concentration of methane can range from 1-8.5 ppm. We assume $M_{\mathrm{in}}$ from the growth rate of emission of methane. For the period of 40 years, starting from 1984 till date, the growth in the concentration is approximately 0.3 ppm. Thus we assume $M_{\mathrm{in}}$ for the terrestrial model in the range of $10^{-5}$ to $10^{-3}$ ppm per day. 

\section{Mathematical analysis} \label{App:Mathematical_Analysis}

The well-posedness of the model is proved in the following theorem by demonstrating the positivity and boundedness of the solutions. Let 
\begin{equation*}
    \Omega_a = \bigl\{(M, X, Y, D) : 0 \leq M \leq \Bar{M}, 0 \leq X \leq K, Y \geq 0, D \geq 0 \bigr\}.
\end{equation*}  
and
\begin{equation*}
    \Omega_t = \bigl\{(M, X, Y, D) : M \geq 0, 0 \leq X \leq K, Y \geq 0, D \geq 0 \bigr\}.
\end{equation*}

\begin{theorem}
\label{th:invariant_set}
    The solutions of system (1)-(3), given initial conditions within the set \( \Omega_a \) (or \( \Omega_t \)), remain in \( \Omega_a \) (or \( \Omega_t \)) for all forward time.
\end{theorem}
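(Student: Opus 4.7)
My plan is to establish (i) local existence and uniqueness, (ii) forward invariance of each boundary face by a Nagumo-type tangential check, and (iii) global existence via an a priori bound that precludes finite-time blow-up of $Y$ and $D$. The right-hand sides of (1)--(3) are continuous and piecewise $C^1$; the only non-smooth term, $\max\{c_1(\bar M - M),0\}$, is Lipschitz in $M$, so standard Picard--Lindel\"of/Carath\'eodory theory gives a unique solution on some maximal interval $[0,T_{\max})$. It then suffices to show that the vector field is not outward-pointing on $\partial\Omega_a$ (resp.\ $\partial\Omega_t$) and that $(M,X,Y,D)$ stays bounded on every compact subinterval of $[0,T_{\max})$.

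\textbf{Boundary analysis.} For each face I verify that the vector field either vanishes or points into the set, so that by Nagumo's theorem the trajectory cannot leave. First, on $\{X=0\}$ and $\{Y=0\}$ the respective equations give $\dot X = 0$ and $\dot Y = 0$, so these axes are themselves invariant and $X,Y$ remain nonnegative. On $\{D=0\}$, $\dot D = d_X X + (d_Y + d_M M)Y \geq 0$ whenever $X,Y,M \geq 0$. On $\{X=K\}$, the logistic factor $1-X/K$ vanishes, leaving $\dot X = -\alpha K Y/(\beta + K) - d_X K \leq 0$, so the upper bound $X \leq K$ is preserved. For the methane component I treat the two models separately: in the aquatic case, when $M=0$, $f(0,D) = c_1 \bar M\,(M_{\mathrm{in}} + e_1 p D) \geq 0$; when $M=\bar M$, the $\max$ factor vanishes and $f(\bar M, D) = -M_{\mathrm{out}} \bar M \leq 0$, so both faces $\{M=0\}$ and $\{M=\bar M\}$ point inward. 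In the terrestrial case, only the face $\{M=0\}$ matters, and there $f(0,D) = M_{\mathrm{in}} + e_1 p D \geq 0$.

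\textbf{Global existence via a priori bounds.} To upgrade these to statements valid for all $t\geq 0$ I need $T_{\max}=\infty$, which requires ruling out finite-time blow-up of $Y$ and $D$ (the components with no hard upper bound in $\Omega_{a/t}$). Using $X\leq K$, I have $\dot Y \leq \tfrac{e\alpha K}{\beta+K}\,Y$, hence $Y(t)$ is bounded by an explicit exponential on any finite interval. In the aquatic case, $M\leq \bar M$ is already established, so $\dot D \leq d_X K + (d_Y + d_M \bar M)Y(t) - pD$ is a linear ODE in $D$ with forcing bounded on finite intervals, giving a finite bound for $D(t)$ on $[0,T_{\max})$. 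In the terrestrial case I apply Gr\"onwall to the combined quantity $V = M + D$, for which
\begin{equation*}
\dot V \;\leq\; M_{\mathrm{in}} + d_X K + d_Y Y(t) \;+\; d_M Y(t)\,V \;-\; \min\{M_{\mathrm{out}},(1-e_1)p\}\,V,
\end{equation*}
(choosing parameters so that the coefficient of $V$ is handled by Gr\"onwall), from which $V(t)$ remains finite on every compact subinterval. This excludes blow-up, so $T_{\max}=\infty$, completing the proof.

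\textbf{Expected main obstacle.} The genuine subtlety lies entirely in the terrestrial case, where $M$ has no intrinsic upper barrier and couples bilinearly to $Y$ through the mortality term $(d_Y + d_M M)Y$ feeding into $D$, which in turn feeds back into $M$ via $e_1 p D$. A naive bound on $Y$ alone leads to a bilinear $MY$ forcing in $\dot D$, and estimating this requires a careful choice of Lyapunov-like combination and a Gr\"onwall argument with time-dependent coefficients. The aquatic case is comparatively routine because the cap $M\leq \bar M$ linearises all the remaining estimates.
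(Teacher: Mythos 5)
Your proposal is correct, and its core content --- the face-by-face sign checks on $\{X=0\}$, $\{Y=0\}$, $\{D=0\}$, $\{X=K\}$, $\{M=0\}$, and (in the aquatic case) $\{M=\bar M\}$ --- consists of exactly the same computations as the paper's proof. The difference is in the surrounding framework. The paper argues by contradiction at the first boundary-contact time $t_1$, using differential inequalities such as $\dot X \geq -d_X X \Rightarrow X(t_1)\geq X(0)e^{-d_X t_1}>0$ for the lower faces and the sign of the vector field for the upper faces; it does not address existence or uniqueness of solutions at all, and in particular never rules out finite-time blow-up of $Y$, $D$, or (in the terrestrial model) $M$, even though the theorem asserts invariance ``for all forward time.'' You instead invoke Nagumo's tangency criterion, which buys you two things: it handles initial data lying on a boundary face (where the paper's ``first touching time'' argument degenerates, since $t_1=0$ and the exponential lower bound gives no contradiction when, say, $X(0)=0$), and it forces you to supply the missing global-existence step. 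Your a priori bounds --- $X\leq K$ giving an exponential bound on $Y$, then a linear ODE for $D$ in the aquatic case, and the Gr\"onwall estimate on $V=M+D$ in the terrestrial case (valid since $e_1<1$ for all parameter ranges used) --- correctly close that gap; note that Gr\"onwall controls $V$ on compact intervals regardless of the sign of the coefficient $d_M Y(t)-\min\{M_{\mathrm{out}},(1-e_1)p\}$, so the parenthetical caveat about ``choosing parameters'' is unnecessary. In short, your argument proves slightly more than the paper's does, at the cost of the extra existence machinery.
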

\begin{proof}
\label{proof:invariant_set}
Assume $S(t)=(M(t), X(t),Y(t), D(t))$ is a solution of system (1) with (2) or (3) with $S(0)\in \Omega_t$ and $t_1$ is the first time that $S(t)$ touches or crosses the boundary of $\Omega_t$. We prove the theorem by contradiction in the following cases

\textit{Case 1:} \( X(t_1) = 0 \). For all \( t \in [0, t_1] \), we have \( \frac{\mathrm{d}X}{\mathrm{d}t} \geq - d_X X \). Thus, \( X(t_1) \geq X(0) e^{-d_X t_1} > 0 \), which contradicts \( X(t_1) = 0 \). Therefore, \( S(t_1) \) cannot reach this boundary.

\textit{Case 2:} \( Y(t_1) = 0 \).
For all \( t \in [0, t_1] \), \( \frac{\mathrm{d}Y}{\mathrm{d}t} \geq - d_Y Y \). Therefore, \( Y(t_1) \geq Y(0) e^{-d_Y t_1} > 0 \), which contradicts \( Y(t_1) = 0 \). Hence, \( S(t_1) \) cannot reach this boundary.

\textit{Case 3:} \( X(t_1) = K \).
For all \( t \in [0, t_1] \),
\[
\frac{\mathrm{d}X}{\mathrm{d}t}\Big|_{t = t_1} = - \frac{\alpha K Y}{\beta + K} - d_X K \leq 0.
\]
Thus, \( S(t_1) \) cannot cross this boundary.

\textit{Case 4:} \( M(t_1) = 0 \).
For all \( t \in [0, t_1] \), \( \frac{\mathrm{d}M}{\mathrm{d}t} \geq - M_\text{out} M \), so \( M(t_1) \geq M(0) e^{- M_\text{out} t_1} > 0 \), which contradicts \( M(t_1) = 0 \). Therefore, \( S(t_1) \) cannot reach this boundary.

\textit{Case 5:} \( D(t_1) = 0 \).
For all \( t \in [0, t_1] \), \( \frac{\mathrm{d} D}{\mathrm{d}t} \geq - p D \), so \( D(t_1) \geq D(0) e^{- p t_1} > 0 \), which contradicts \( D(t_1) = 0 \). Therefore, \( S(t_1) \) cannot reach this boundary.

If, additionally, \( S(t) \) is a solution of system (1) with (2), and \( S(0) \in \Omega_a \), let \( t_1 \) be the first time that \( S(t) \) touches or crosses the boundary of \( \Omega_a \). Consider the following case:

\textit{Case 6:} \( M(t_1) = \Bar{M} \).
We have \( \frac{\mathrm{d}M}{\mathrm{d}t} \Big|_{t=t_1} = - M_\text{out} \bar{M} \leq 0 \), implying that \( S(t_1) \) cannot cross this boundary.

In summary, the solution \( S(t) \) of system (1), starting within \( \Omega_t \) (or \( \Omega_a \)), will remain in \( \Omega_t \) (or \( \Omega_a \)) for all forward time. This completes the proof.  

\end{proof}

This theorem guarantees that the solutions are always positive and that the producer density is bounded. That is, the model for both terrestrial and aquatic ecosystems remains biologically meaningful.

\subsection{Steady States of the system}\label{App:Steady_state}
To analyze the asymptotic behavior of the system, we study the existence and stability of equilibria. By direct calculations, we determined that the system (1)-(3) can exhibit three types of equilibrium points:

1. Total Extinction State of the form \(E_0(M^*, 0, 0, 0)\), where resource, consumer and detritus are extinct is given by
$$ E_0^a = \left(\frac{c_1 \Bar{M} M_\text{in}}{M_\text{out} + c_1 M_\text{in}}, 0, 0, 0\right),\,\,E_0^t = \left(\frac{M_\text{in}}{M_\text{out}}, 0, 0, 0\right),$$
where the superscript $`a$' and $`t$' correspond to the model for aquatic ecosystems and terrestrial ecosystems, respectively.

2. The semi-trivial Equilibrium \(E_1(M^*, X^*, 0, D^*)\) representing the extinction of consumer population is given by
$$E_1^a = \left(\frac{c_1 \Bar{M} \left(M_\text{in} + e_1 d_X X^* \right)}{M_\text{out} + c_1 \left(M_\text{in} + e_1 d_X X^* \right)}, X^*, 0, \frac{d_X X^*}{p}\right),\,\,E_1^t = \left(\frac{M_\text{in} + e_1 d_X X^*}{M_\text{out}}, X^*, 0, \frac{d_X X^*}{p}\right).$$
The description of the superscript is the same as explained before.

3. The coexistence equilibrium of the form \(E_*(M^*, X^*, Y^*, D^*)\), where both resource and consumer coexist is given by  
   $$
   M^* = \frac{1}{d_M}\left(\frac{e}{1 + \gamma_2\left(T - T_Y\right)^2} \frac{\alpha X^*}{\beta + X^*} - d_Y\right), \quad    
   Y^* = \frac{p D^* - d_X X^*}{d_Y + d_M M^*}.
  $$ 
  However, the equilibrium of the detritus density for the two models is obtained as
  $$D^{*a} = \frac{M_\text{out} M^*}{c_1 e_1 p \left(\Bar{M} - M^*\right)} - \frac{M_\text{in}}{e_1 p},\,\,D^{*t} = \frac{M_\text{out} M^* - M_\text{in}}{e_1 p}.$$

\subsection{Linear stability}\label{Appendix:Stability}
We present the following stability results for the equilibria described above. Re-writing the system (1) in the form of $\frac{\mathrm{d} U}{\mathrm{d} t} = \Phi(U)$.

\subsubsection{Stability of trivial-equilibrium}
\begin{theorem}
\label{Th: extinction E0 stability}
    The extinction equilibrium $E_0^a(M^*,0,0,0)$ ($E_0^t$) is locally asymptotically stable (LAS) if $ \frac{c_1M_{\mathrm{in}}}{M_{\mathrm{out}}+c_1 M_{\mathrm{in}}}<\frac{d_X \gamma_1 \left(T-T_X\right)^2 + d_X - r}{\bar{M}r \gamma_M}$, otherwise, it is unstable.
\end{theorem}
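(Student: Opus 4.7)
The plan is a direct linear stability analysis at $E_0$. First I would compute the Jacobian $J(E_0)$ of the vector field on the right-hand side of (1) at $E_0 = (M^*, 0, 0, 0)$. Since $X = Y = D = 0$ at this point, any partial derivative of a term having $X$, $Y$, or $D$ as a factor contributes only through that explicit factor; this kills most cross-derivatives. The resulting matrix is essentially block-triangular: the methane row couples only to $D$ through $\partial f/\partial D = c_1 e_1 p(\bar M - M^*)$, the $X$-row contributes only its diagonal, the $Y$-row contributes only its diagonal $-(d_Y + d_M M^*)$, and the $D$-row picks up entries from $X$ and $Y$ columns but has $-p$ on the diagonal. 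Because no cycle of nonzero off-diagonal couplings exists, the eigenvalues are read directly from the diagonal.

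This yields four real eigenvalues:
\begin{equation*}
\lambda_1 = -(c_1 M_{\mathrm{in}} + M_{\mathrm{out}}), \quad \lambda_2 = \frac{r(1 + \gamma_M M^*)}{1 + \gamma_1(T - T_X)^2} - d_X, \quad \lambda_3 = -(d_Y + d_M M^*), \quad \lambda_4 = -p.
\end{equation*}
Three of these are manifestly negative under the biologically admissible parameter range, so the stability of $E_0^a$ hinges entirely on the sign of $\lambda_2$. Requiring $\lambda_2 < 0$ and rearranging gives
\begin{equation*}
M^* < \frac{d_X \gamma_1 (T - T_X)^2 + d_X - r}{r \gamma_M}.
\end{equation*}
Substituting $M^* = c_1 \bar M M_{\mathrm{in}}/(M_{\mathrm{out}} + c_1 M_{\mathrm{in}})$ and dividing both sides by $\bar M$ reproduces exactly the hypothesis of the theorem, at which point local asymptotic stability follows from the Hartman--Grobman theorem. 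If the reverse inequality holds, then $\lambda_2 > 0$, producing a one-dimensional unstable manifold along the $X$-direction and hence instability.

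The only delicate step is the non-smooth $\max\{c_1(\bar M - M), 0\}$ in the aquatic methane equation, which requires checking that $E_0^a$ lies in the smooth regime so that $\partial f/\partial M$ is unambiguously defined. Since $M^* = \bar M \cdot c_1 M_{\mathrm{in}}/(M_{\mathrm{out}} + c_1 M_{\mathrm{in}}) < \bar M$, a neighbourhood of $E_0^a$ is contained in $\{M < \bar M\}$, where the max agrees with its smooth piece $c_1(\bar M - M)$; this gives $\partial f/\partial M|_{E_0^a} = -(c_1 M_{\mathrm{in}} + M_{\mathrm{out}})$ as used above. The terrestrial analogue for $E_0^t$ with $f(M,D) = M_{\mathrm{in}} + e_1 p D - M_{\mathrm{out}} M$ proceeds identically without this subtlety, yielding the corresponding condition $M_{\mathrm{in}}/M_{\mathrm{out}} < (d_X \gamma_1(T-T_X)^2 + d_X - r)/(r \gamma_M)$. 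I do not anticipate a genuine obstacle, as the bulk of the argument is bookkeeping; the only place careful verification is needed is the smoothness check for the aquatic case.
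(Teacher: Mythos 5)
Your proposal is correct and follows essentially the same route as the paper: compute the Jacobian at $E_0$, observe that its structure forces the eigenvalues to be the diagonal entries, note that three are automatically negative, and reduce stability to the sign of $\frac{r(1+\gamma_M M^*)}{1+\gamma_1(T-T_X)^2}-d_X$, which rearranges to the stated condition. Your extra check that $M^*<\bar M$ keeps the aquatic system in the smooth branch of the $\max$ is a sensible refinement the paper leaves implicit, but it does not change the argument.
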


\begin{proof}
\label{proof: extinction E0 stability}
At \( E_0^a \), the Jacobian matrix is given by
\begin{equation}
D \Phi (E_0^a) =
    \begin{pmatrix}
       -c_1 M_{\text{in}} - M_{\text{out}} & 0 & 0 & c_1 e_1 p \max\{\Bar{M}-M^*,0\} \\
       0 & \frac{r \left(1 + \gamma_M M^*\right)}{1 + \gamma_1 \left(T - T_X\right)^2}-d_X & 0 & 0 \\
       0 & 0 & -d_Y - d_M M^* & 0 \\
       0 & d_X & d_Y + d_M M^* & -p
    \end{pmatrix}.
\end{equation}
The eigenvalues of the Jacobian matrix \( D\Phi (E_0^a)\) are $-c_1 M_{\text{in}} - M_{\text{out}} <0 $, $\frac{r \left(1 + \gamma_M M^*\right)}{1 + \gamma_1 \left(T - T_X\right)^2}-d_X$, $-d_Y - d_M M^*<0$, $-p<0$. Therefore, \( E_0^a \) is LAS if and only if $\frac{c_1M_{\mathrm{in}}}{M_{\mathrm{out}}+c_1 M_{\mathrm{in}}}<\frac{d_X \gamma_1 \left(T-T_X\right)^2 + d_X - r}{\bar{M}r \gamma_M},$
otherwise, \( E_0^a \) is an unstable saddle node.
The same argument applies to \( E_0^t \), completing the proof.
\end{proof}
This implies that if the growth rate influenced by methane is less than the death rate of the prey, neither the prey nor the predator can survive. Moreover, for the semi-trivial equilibrium, where only the prey survives, the stability is analyzed in Theorem \ref{Th: prey-only E1 stability}. The coexistence state of both prey and predator can also be either LAS or unstable, with the conditions detailed in Theorem \ref{Th:coexistence} and Remark \ref{rem: sufficient and necessary conditions for E_*}.

\subsubsection{Stability of semi-trivial equilibrium}

\begin{theorem}
\label{Th: prey-only E1 stability}
The semi-trivial equilibrium \(E_1^a\) (\(E_1^t\)) is LAS for the aquatic (terrestrial) system if the inequalities \eqref{eq: condition for E1 stable} hold with \(a_1^a, a_2^a\) (\(a_1^t, a_2^t\)). Otherwise, it is unstable.
\end{theorem}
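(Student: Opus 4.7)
The plan is to compute $D\Phi(E_1)$ and exploit a block structure that comes for free from $Y^{*}=0$. Because the right-hand sides for $\dot M$, $\dot X$ and $\dot D$ are either linear in $Y$ or independent of $Y$, and because differentiating $\dot Y$ produces an overall factor of $Y^{*}=0$ in three of its four entries, the $Y$-row of the Jacobian collapses to a single diagonal entry
\[
\lambda_Y \;=\; \frac{e}{1+\gamma_{2}(T-T_{Y})^{2}}\,\frac{\alpha X^{*}}{\beta+X^{*}}\;-\;d_{Y}\;-\;d_{M}M^{*}.
\]
Expanding the determinant along that row factors the characteristic polynomial as
\[
\det\!\bigl(\lambda I - D\Phi(E_{1})\bigr) \;=\; (\lambda-\lambda_{Y})\,P_{3}(\lambda),
\]
where $P_{3}(\lambda)=\lambda^{3}+a_{1}\lambda^{2}+a_{2}\lambda+a_{3}$ is the characteristic polynomial of the remaining $3\times 3$ block in the $(M,X,D)$ coordinates. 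The sign condition $\lambda_{Y}<0$ is the standard ``failed invasion'' criterion for the consumer, which I would record as a first ingredient.

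Next I would write the $3\times3$ block explicitly. The only difference between the aquatic and terrestrial cases sits in the partial derivatives of $f(M,D)$: in the aquatic case, assuming $M^{*}<\bar M$ so that the $\max$ is smooth in a neighbourhood, one has $\partial_{M}f=-c_{1}(M_{\text{in}}+e_{1}pD^{*})-M_{\text{out}}$ and $\partial_{D}f=c_{1}e_{1}p(\bar M-M^{*})$, whereas in the terrestrial case these collapse to $-M_{\text{out}}$ and $e_{1}p$ respectively. The $(X,X)$ entry can be rewritten, using $\dot X=0$ at $E_{1}$, as the clean expression $-r(1+\gamma_{M}M^{*})X^{*}/\bigl(K(1+\gamma_{1}(T-T_{X})^{2})\bigr)$, which is strictly negative whenever $E_{1}$ exists; the remaining off-diagonal entries follow directly from the model equations.

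The stability analysis of $P_{3}$ then reduces to the Routh--Hurwitz criterion for cubics: all roots have negative real part if and only if $a_{1}>0$, $a_{3}>0$ and $a_{1}a_{2}-a_{3}>0$. I expect \eqref{eq: condition for E1 stable} to be a repackaging of these inequalities (with $a_{3}>0$ likely automatic from the sign structure above, so that only the combinations $a_{1}^{a},a_{2}^{a}$, respectively $a_{1}^{t},a_{2}^{t}$, appear as the non-trivial conditions) together with $\lambda_{Y}<0$. The converse ``otherwise unstable'' direction follows because violation of any strict Routh--Hurwitz inequality forces $P_{3}$ to have a root with non-negative real part, producing an unstable eigenvector of $D\Phi(E_{1})$.

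The main obstacle is purely algebraic. Obtaining $a_{3}>0$ and $a_{1}a_{2}>a_{3}$ in closed form requires expanding a $3\times 3$ determinant whose entries mix methane-exchange rates, a logistic response evaluated on the implicit curve $X^{*}(M^{*})$, and detritus turnover. Keeping these expressions both correct and readable is the delicate step, and it is also where the aquatic coefficients $a_{i}^{a}$ genuinely diverge from their terrestrial counterparts $a_{i}^{t}$. The non-smooth boundary $M^{*}=\bar M$ of the aquatic $\max$ is excluded as degenerate; biologically it corresponds to a saturated aquatic environment and can be treated separately via one-sided derivatives if needed.
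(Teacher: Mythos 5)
Your proposal follows essentially the same route as the paper: the $\dot Y$-row decouples at $Y^{*}=0$, giving the single eigenvalue $\lambda_Y$ (the paper's $-a_6$, i.e.\ the failed-invasion condition $a_6>0$), and the remaining $3\times 3$ block in $(M,X,D)$ is handled by the Routh--Hurwitz criterion for a cubic, with the aquatic/terrestrial distinction entering only through $\partial_M f$ and $\partial_D f$ exactly as you describe. The one inaccuracy is your parenthetical guess that the constant term of the cubic is automatically positive: the feedback loop $D\to M\to X\to D$ contributes the positive product $a_2^a a_3 d_X$ to that term, so the requirement $a_1^a a_4 p > a_2^a a_3 d_X$ is precisely one of the non-trivial inequalities recorded in \eqref{eq: condition for E1 stable}, not a consequence of the sign structure.
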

\begin{proof}
\label{proof:prey-only E1 stability}

At \(E_1^a\), the Jacobian matrix \(D \Phi(E_1^a)\) is
\[\scriptsize
\begin{pmatrix}
       -c_1(M_{\text{in}} + e_1 p D^*) - M_{\text{out}} & 0 & 0 & c_1 e_1 p \max\{\Bar{M} - M^*, 0\} \\
       \frac{r \gamma_M X^* \left(1 - \frac{X^*}{K}\right)}{1 + \gamma_1 (T - T_X)^2} & \frac{r(1 + \gamma_M M^*) }{1 + \gamma_1 (T - T_X)^2}\left(1 - \frac{2X^*}{K}\right) - d_X & -\frac{\alpha X^*}{\beta + X^*} & 0 \\
       0 & 0 & \frac{e \alpha X^*}{(1 + \gamma_2 (T - T_Y)^2)(\beta + X^*)} - (d_Y + d_M M^*) & 0 \\
       0 & d_X & d_Y + d_M M^* & -p
\end{pmatrix}.
\]
Simplifying notation, becomes
\[
D \Phi(E_1^a) =
\begin{pmatrix}
    -a_1^a & 0 & 0 & a_2^a \\
    a_3 & -a_4 & -a_5 & 0 \\
    0 & 0 & -a_6 & 0 \\
    0 & d_X & a_7 & -p
\end{pmatrix}.
\]
The characteristic equation is
\[
\left(-a_6 - \lambda\right)\left((-a_1^a - \lambda)(-a_4 - \lambda)(-p - \lambda) + a_2^a a_3 d_X\right) = 0.
\]
Therefore, one eigenvalue is given by $-a_6$. The equilibrium $E_1$ is stable if and only if all real parts of eigenvalues of $D\Psi(E_1^a)$ are negative. Therefore, we require $a_6 >0$. To make sure all the rest eigenvalues are negative, we find the Hurwitz matrix for 
the rest three eigenvalues. The Hurwitz matrix is given by
\[
\text{Hurwitz matrix} =
\begin{pmatrix}
    b_1 & b_3 & 0 \\
    b_0 & b_2 & 0 \\
    0 & b_1 & b_3
\end{pmatrix},
\]
where
\[
\begin{aligned}
    b_0 &= 1, \quad b_1 = a_1^a + a_4 + p, \\
    b_2 &= a_1^a a_4 + a_1^a p + a_4 p, \quad b_3 = a_1^a a_4 p - a_2^a a_3 d_X.
\end{aligned}
\]
Therefore, we need $b_1>0$, $b_1 b_2 - b_0 b_3 >0$, $b_3>0$.
To summarize, $E_1^a$ is stable if and only if 
\begin{equation}
\label{eq: condition for E1 stable}
    \begin{aligned}
        & a_6 > 0, \quad a_1^a + a_4 + p > 0, \quad a_1^a a_4 p > a_2^a a_3 d_X, \\
        & (a_1^a + a_4 + p)(a_1^a a_4 + a_1^a p + a_4 p) > a_1^a a_4 p - a_2^a a_3 d_X.
    \end{aligned}
\end{equation}
holds, otherwise, it's unstable. For the terrestrial case, we use the same argument by substituting $a_1^a$ and $a_2^a$ by $a_1^t$ and $a_2^t$.
\end{proof}
\subsubsection{Stability of interior equilibrium}
\begin{theorem}
\label{Th:coexistence}
The interior equilibrium \(E_*^a\) is LAS for the aquatic system if \eqref{eq:E_*_stability_1}–\eqref{eq:E_*_stability_4} hold, and \(E_*^t\) is LAS for the terrestrial system if \eqref{eq:E_*_stability_2}–\eqref{eq:E_*_stability_5} hold. 
\end{theorem}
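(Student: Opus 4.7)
The plan is to mimic the strategy used for $E_1$: linearize \eqref{Eq:Model} at the interior equilibrium, exploit the equilibrium relations to simplify entries of the Jacobian, compute the characteristic polynomial, and then apply the Routh--Hurwitz criterion for a quartic. Unlike the $E_1$ case, at $E_*$ the $Y$-equation no longer decouples, so we must work with a genuine $4\times 4$ characteristic polynomial rather than extracting an obvious eigenvalue first.

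\textbf{Step 1: Jacobian at $E_*$.} Writing the system as $\dot U = \Phi(U)$ with $U=(M,X,Y,D)^{\top}$, the Jacobian $D\Phi(E_*)$ has the block structure
\[
D\Phi(E_*)=\begin{pmatrix} J_{11} & 0 & 0 & J_{14} \\ J_{21} & J_{22} & J_{23} & 0 \\ J_{31} & J_{32} & 0 & 0 \\ 0 & d_X & J_{43} & -p\end{pmatrix},
\]
where the nonzero entries come from differentiating the right-hand sides and are then reduced using the equilibrium identities for $M^*, X^*, Y^*, D^*$ derived in Section \ref{App:Steady_state}. In particular $J_{33}=0$ by the $Y$-equilibrium condition
$\frac{e}{1+\gamma_2(T-T_Y)^2}\frac{\alpha X^*}{\beta+X^*}=d_Y+d_M M^*$, and $J_{23}=-\frac{\alpha X^*}{\beta+X^*}<0$, $J_{32}=\frac{e\alpha\beta Y^*}{(1+\gamma_2(T-T_Y)^2)(\beta+X^*)^2}>0$. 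In the aquatic case $J_{11}=-c_1(M_{\mathrm{in}}+e_1 pD^*)-M_{\mathrm{out}}$ and $J_{14}=c_1 e_1 p\max\{\bar M-M^*,0\}$ (so we tacitly assume the biologically relevant case $M^*<\bar M$, which is guaranteed whenever $E_*$ lies in $\Omega_a$); in the terrestrial case $J_{11}=-M_{\mathrm{out}}$ and $J_{14}=e_1 p$.

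\textbf{Step 2: Characteristic polynomial.} Expanding $\det(\lambda I-D\Phi(E_*))$ along the first row gives a quartic
\[
P(\lambda)=\lambda^4+A_1\lambda^3+A_2\lambda^2+A_3\lambda+A_4,
\]
whose coefficients $A_i$ are explicit polynomial combinations of $p$, $-J_{11}$, $-J_{22}$ and the products $J_{23}J_{32}$, $J_{21}J_{14}d_X$, $J_{31}J_{23}J_{14}$, etc. The key simplifications are $J_{33}=0$ and the vanishing entries in the first and third columns/rows, which keep the expansion tractable.

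\textbf{Step 3: Routh--Hurwitz.} For a quartic with positive leading coefficient, all roots lie in the open left half plane iff
\[
A_1>0,\qquad A_3>0,\qquad A_4>0,\qquad A_1A_2A_3>A_3^{\,2}+A_1^{\,2}A_4.
\]
I will verify that these four inequalities are precisely the stability conditions \eqref{eq:E_*_stability_1}--\eqref{eq:E_*_stability_4} in the aquatic case, and \eqref{eq:E_*_stability_2}--\eqref{eq:E_*_stability_5} in the terrestrial case; the two cases differ only in the form of $J_{11}$ and $J_{14}$, which plug into the coefficients $A_i$ in the same positions. The positivity $A_1>0$ follows immediately from $J_{11}<0$, $p>0$ and the sign of $J_{22}$ at $E_*$ (which in turn comes from $X^*$ being past the hump of the prey nullcline, i.e.\ from the expression obtained by differentiating the prey logistic term and using the equilibrium); the remaining inequalities are the substantive hypotheses listed in the statement.

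\textbf{Main obstacle.} The bookkeeping of the coefficients $A_1,\dots,A_4$ is the only delicate part: many terms survive because neither $M$ nor $D$ decouples, and showing that the last Routh--Hurwitz determinant collapses to the compact inequality \eqref{eq:E_*_stability_4} (resp.\ \eqref{eq:E_*_stability_5}) requires repeated use of the equilibrium identities to eliminate $Y^*$ and $D^*$ in favour of $M^*,X^*$. Once this algebraic reduction is carried out, the stability statement reduces to the stated inequalities and instability in the complementary region follows because at least one Routh--Hurwitz inequality is violated, forcing $P(\lambda)$ to have a root with positive real part.
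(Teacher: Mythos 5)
Your overall strategy (characteristic polynomial plus Routh--Hurwitz) is a legitimate way to attack stability of $E_*$, but it is not what the paper does, and the step on which your whole argument rests --- the claim in Step 3 that the four Routh--Hurwitz inequalities ``are precisely'' \eqref{eq:E_*_stability_1}--\eqref{eq:E_*_stability_4} --- is asserted rather than carried out, and it cannot succeed. The paper obtains \eqref{eq:E_*_stability_1}--\eqref{eq:E_*_stability_4} from the Gershgorin circle theorem: each condition states that one Gershgorin disc of $D\Phi(E_*)$, centred at a diagonal entry with radius equal to that row's off-diagonal absolute sum, lies in the open left half-plane. Consequently each inequality involves only the entries of a single row of the Jacobian; for example \eqref{eq:E_*_stability_4} is simply $d_X+d_Y+d_M(M^*+Y^*)<p$, i.e.\ the off-diagonal sum of the detritus row dominated by $p$. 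The Routh--Hurwitz quantities $A_1,\dots,A_4$ (trace, sums of principal minors, determinant, and the composite condition $A_1A_2A_3>A_3^2+A_1^2A_4$) necessarily mix entries from all rows and cannot reduce to such row-wise statements; no amount of substitution of the equilibrium identities will collapse $A_1A_2A_3-A_3^2-A_1^2A_4>0$ to \eqref{eq:E_*_stability_4}. So the central identification step of your proposal is a genuine gap.

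Two further points. First, your block structure for $D\Phi(E_*)$ places a $0$ in the $(4,1)$ position, but $\partial \dot D/\partial M=d_MY$ evaluates to $d_MY^*>0$ at the interior equilibrium, so that entry does not vanish and would propagate into your coefficients $A_i$. Second, the theorem claims only sufficiency (which is all Gershgorin delivers), so your closing remark about instability in the complementary region is both unnecessary and not obtainable from the stated conditions. If you pursue the Routh--Hurwitz route you will end up proving a different (sharper, necessary-and-sufficient, but far less explicit) criterion; to prove the theorem as stated, the short argument is to write down the Jacobian, read off the four row conditions \eqref{eq:E_*_stability_1}--\eqref{eq:E_*_stability_4} (resp.\ \eqref{eq:E_*_stability_2}--\eqref{eq:E_*_stability_5} with the terrestrial first row), and invoke the Gershgorin circle theorem.
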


\begin{proof}
\label{proof:Th:coexistence}
Given the complexity of the system (1), we derive sufficient conditions for the stability of $E_*$. The Jacobian matrix $D\Phi (E_*)$ is given by
   \begin{equation}
\tiny
\begin{pmatrix}
       -c_1(M_{\text{in}} + e_1 p D^*) - M_{\text{out}} & 0 & 0 & c_1 e_1 p \max\{\Bar{M} - M^*, 0\} \\
       \frac{r \gamma_M X^* \left(1 - \frac{X^*}{K}\right)}{1 + \gamma_1 (T - T_X)^2} & \frac{r\left(1+\gamma_MM^*\right)\left(K-2X^*\right)}{K\left(1 + \gamma_1 (T - T_X)^2\right)} - d_X-\frac{\alpha\beta Y^*}{\left(X^*+\beta\right)^2} & -\frac{\alpha X^*}{\beta + X^*} & 0 \\
       -d_M Y^* & \frac{e\alpha\beta Y^*}{\left(1+\gamma_2(T-T_Y)^2\right)(X^*+\beta)^2} & \frac{e \alpha X^*}{(1 + \gamma_2 (T - T_Y)^2)(\beta + X^*)} - (d_Y + d_M M^*) & 0 \\
       d_M Y^* & d_X & d_Y + d_M M^* & -p
    \end{pmatrix}.
\end{equation}
The Gershgorin circle theorem \cite{Varga} states that all eigenvalues of \(D\Phi(E_*)\) lie within the Gershgorin discs \(C_i(a_{ii}, R_i)\), where \(a_{ii}\) is the diagonal entry, and \(R_i = \sum_{j \neq i} |a_{ij}|\) is the sum of the absolute values of the off-diagonal entries. For stability, all discs must lie in the left half-plane. This yields the following conditions
\begin{equation}\label{eq:E_*_stability_1}
    c_1 e_1 p \max\{\Bar{M} - M^*, 0\} < c_1(M_{\text{in}} + e_1 p D^*) + M_{\text{out}} ,
\end{equation}

\begin{equation}\label{eq:E_*_stability_2}
\begin{aligned}
    &\frac{r \left(1 + \gamma_M M^*\right)\left(K-2X^*\right) + r \gamma_M X^* \left(K - X^*\right)}{K\left(1 + \gamma_1 (T-T_X)^2\right)} < d_X + \frac{\alpha\beta( Y^* - X^*) - \alpha {X^*}^2 }{\left( \beta + X^*\right)^2},
\end{aligned}    
\end{equation}

\begin{equation}\label{eq:E_*_stability_3}
    \begin{aligned}
       \frac{e \alpha (X^{*2} + X^* \beta + Y^* \beta)}{(1 + \gamma_2 (T - T_Y)^2) (X^* + \beta)^2} < d_Y + d_M (M^* - Y^*),
    \end{aligned}
\end{equation}

\begin{equation}\label{eq:E_*_stability_4}
    \begin{aligned}
       d_X + d_Y + d_M (M^* + Y^*) < p.
    \end{aligned}
\end{equation}
The inequalities \eqref{eq:E_*_stability_1}–\eqref{eq:E_*_stability_4} ensure that all Gershgorin discs lie entirely in the left half-plane, which implies that the real parts of all eigenvalues of \(D\Phi(E_*^a)\) are negative. Thus, \(E_*^a\) is locally asymptotically stable.
For the terrestrial system, condition \eqref{eq:E_*_stability_1} is replaced by
\begin{equation} 
\label{eq:E_*_stability_5}
e_1 p < M_{\text{out}}. 
\end{equation} \end{proof}

\begin{remark}
\label{rem: sufficient and necessary conditions for E_*}
We investigate the sufficient and necessary conditions for the local asymptotic stability of the interior equilibria \(E_*^a\) and \(E_*^t\) under biologically feasible parametric restrictions. When \(p\), \(e_1\), and \(c_1\) are sufficiently small, both \(E_*^a\) and \(E_*^t\) are LAS if
\[
\frac{r\left(1+\gamma_MM^*\right)\left(K-2X^*\right)}{K\left(1 + \gamma_1 (T - T_X)^2\right)} >  d_X + \frac{\alpha\beta Y^*}{\left(X^*+\beta\right)^2}, \quad \text{and} \quad X^* < K/2;
\]
otherwise, they are unstable. 
\end{remark}
\begin{proof}
\label{proof: sufficient and necessary conditions for E_*}
The Jacobian matrix \(D\Phi (E_*^a)\) evaluated at \(E_*^a\) is given by 
\begin{equation} \label{mat:Jacobian}
\tiny
\begin{pmatrix}
       -c_1(M_{\text{in}} + e_1 p D) - M_{\text{out}} & 0 & 0 & c_1 e_1 p \max\{\Bar{M}-M,0\} \\
       \frac{r \gamma_M X \left(1 - \frac{X}{K}\right)}{1 + \gamma_1 (T - T_X)^2} & \frac{r \left(1 + \gamma_M M\right) \left(1 - \frac{2X}{K}\right)}{1 + \gamma_1 (T - T_X)^2} - \frac{\alpha \beta Y}{(\beta + X)^2} - d_X & -\frac{\alpha X}{\beta + X} & 0 \\
       -d_M Y & \frac{e \alpha Y}{(1 + \gamma_2 (T - T_Y)^2)(\beta + X)^2} & \frac{e \alpha X}{(1 + \gamma_2 (T - T_Y)^2)(\beta + X)} - (d_Y + d_M M) & 0 \\
       d_M Y & d_X & d_Y + d_M M & -p
    \end{pmatrix}
\end{equation}
and can be expressed as
\[
\begin{pmatrix}
    a_{11}^a & 0 & 0 & a_{14}^a \\
    a_{21} & a_{22} & a_{23} & 0 \\
    a_{31} & a_{32} & 0 & 0 \\
    a_{41} & a_{42} & a_{43} & a_{44}
\end{pmatrix}.
\]
The eigenvalues of \(D\Phi (E_*^a)\) are determined by solving
\[
|D\Phi (E_*^a) - \lambda I| = 0.
\]
Expanding this determinant, we obtain
\[
\begin{aligned}
    &(a_{11}^a - \lambda)(a_{44} - \lambda)(\lambda^2 - a_{22}\lambda - a_{32}a_{23}) \\
    &- a_{14}^a \Big(a_{41} \lambda^2 + (-a_{22} a_{41} + a_{21} a_{42} + a_{31} a_{43}) \lambda  \\
    &\quad - a_{23} a_{32} a_{41} + a_{23} a_{31} a_{42} - a_{22} a_{31} a_{43} + a_{21} a_{32} a_{43} \Big) = 0.
\end{aligned}
\]
Since \(a_{14}^a = c_1pe_1(\Bar{M} - M^*)\), for \(p, e_1 \ll 1\), we have \(a_{14}^a \ll 1\). For the chosen parameter values in this study, \(a_{14}^a = O(10^{-7})\). Thus, we approximate the eigenvalues by solving
\[
(a_{11}^a - \lambda)(a_{44} - \lambda)(\lambda^2 - a_{22}\lambda - a_{32}a_{23}) = 0.
\]
Furthermore, we have
\[
a_{11}^a = -c_1(M_{\text{in}} + e_1 p D^*) - M_{\text{out}} < 0, \quad a_{44} = -p < 0.
\]
Hence, two eigenvalues can be approximated as \(\lambda_1 \approx a_{11}^a\) and \(\lambda_2 \approx a_{44}\). The remaining two eigenvalues, \(\lambda_3\) and \(\lambda_4\), are determined by solving
\[
\lambda^2 - a_{22}\lambda - a_{32}a_{23} = 0.
\]
Here
\[
a_{23} = -\frac{\alpha X^*}{\beta + X^*} < 0, \quad a_{32} = \frac{e\alpha\beta Y^*}{\left(1+\gamma_2(T-T_Y)^2\right)(X^*+\beta)^2} > 0.
\]
Therefore, \(E_*^a\) is locally asymptotically stable if \(a_{22} > 0\), which gives the condition
\[
\frac{r\left(1+\gamma_MM^*\right)\left(K-2X^*\right)}{K\left(1 + \gamma_1 (T - T_X)^2\right)} >  d_X + \frac{\alpha\beta Y^*}{\left(X^*+\beta\right)^2}.
\]

Using similar arguments for \(E_*^t\), the same stability results hold.
\end{proof}


\section{Multiple timescales analysis}
\subsection{Nondimensionalization}
 \label{sec: Nondimensionalization}
 We non-dimensionalize the system (1) using the re-scaled variables as
 \begin{equation*}
    m=\gamma_M M,\,u = \frac{X}{K},\,v=\frac{Y}{K},\,w=\frac{p}{M_{\mathrm{in}}}D,\,t=rt,
\end{equation*} and obtain the non-dimensionalized system as
\begin{equation}\label{Eq:nondimension_model}
\begin{aligned}
     \frac{\mathrm{d} m}{\mathrm{d} t} &= \zeta\max\{ \left(\theta-m\right),0\}(1 + e_1w) - \sigma m,\\
     \frac{\mathrm{d} u}{\mathrm{d} t} &= \rho_1 u\left( 1 + m \right)\left(1-u\right) - \frac{ \tau uv}{\kappa + u}  - \mu_1 u,\\
     \frac{\mathrm{d} v}{\mathrm{d} t} &= \rho_2  \frac{\tau uv}{\kappa + u}  - \left(\mu_2 + \eta m\right) v,\\
    \frac{\mathrm{d}w}{\mathrm{d} t} & =  \varepsilon_2\left( \mu_1 u + \left(\mu_2 + \eta m\right) v - \delta w\right),             
\end{aligned}
\end{equation}
where the dimensionless parameters are 
\begin{equation} \label{eq:nondimen_parameter}
   \begin{aligned}
         \zeta =\frac{M_{\mathrm{in}}c_1}{r},\,&\theta =\gamma_M \bar{M},\, \sigma= \frac{M_{\mathrm{out}}}{r},\, \rho_1=\frac{1}{1+ \gamma_1 (T-T_X)^2},\,\kappa=\frac{\beta}{K},\,
  \mu_1 = \frac{d_X}{r},\, \mu_2=\frac{d_Y}{r},\\
  &\rho_2=\frac{e}{1+ \gamma_2 (T-T_Y)^2},\,\
  \tau=\frac{\alpha}{r},\,
  \eta = \frac{d_M}{r\gamma_M},\,
 \delta = \frac{M_{\mathrm{in}}}{rK},\,\varepsilon_2 =\frac{pK}{M_{\mathrm{in}}}.     
   \end{aligned}
\end{equation}
Table \ref{tab:Model_para_table} shows that detritus dynamics operate on a smaller scale than phytoplankton and zooplankton, which are smaller still than methane dynamics. To capture this hierarchy, we rescale the parameters with a small dimensionless parameter \cite{Pranali24} $0<\varepsilon_1<1$ such that $$\rho_1 =\varepsilon_1\Tilde{\rho_1},\,\tau = \varepsilon_1\Tilde{\tau},\,\mu_1=\varepsilon_1\Tilde{\mu_1},\,\mu_2=\varepsilon_1\Tilde{\mu_2},\,\eta=\varepsilon_1\Tilde{\eta}$$ and by removing the tilde we obtain a multiple timescale system as follows
\begin{equation}\label{Eq:three_timescale_model}
\begin{aligned}
     \frac{\mathrm{d} m}{\mathrm{d} t} &= \zeta\max\{ \left(\theta-m\right),0\}(1 + e_1w) - \sigma m \equiv
      F_1(m,w),\\
     \frac{\mathrm{d} u}{\mathrm{d} t} &= \varepsilon_1\left(\rho_1 u\left( 1 + m \right)\left(1-u\right) - \frac{ \tau uv}{\kappa + u}  - \mu_1 u\right) \equiv
      \varepsilon_1 F_2(m,u,v),\\
     \frac{\mathrm{d} v}{\mathrm{d} t} &= \varepsilon_1\left(\rho_2  \frac{\tau uv}{\kappa + u}  - \left(\mu_2 + \eta m\right) v\right) \equiv
      \varepsilon_1 F_3(m,u,v),\\
    \frac{\mathrm{d}w}{\mathrm{d} t} & =  \varepsilon_2\left( \mu_1 u + \left(\mu_2 + \eta m\right) v - \delta w\right)\equiv
      \varepsilon_2 F_4(m,u,v,w).         
\end{aligned}
\end{equation}
The values of the dimensionless parameters of the system \eqref{Eq:three_timescale_model} are based on the biologically feasible range of parameter values provided in Table \eqref{tab:Model_para_table}. For numerical simulations in this section, we fix the dimensionless parameters at
\begin{equation}\label{par:parameter_values_nondimension}
    \begin{aligned}
        \theta = 22.7,\,&\sigma=0.8,\,\rho_1=0.4,\,
  \kappa=0.55,\,\mu_1=0.1,\,
  \mu_2=0.08,\,
  \rho_2=0.55,\\
  &\tau=6,\,\eta=0.01,\,
 \delta=40,\,\varepsilon_1=0.1,\,
  \varepsilon_2=0.0001  
    \end{aligned}
\end{equation}
and $\zeta$ is varied between $0-8$. We now have a system with three levels of dynamics: fast, intermediate, and slow. The dynamics of methane occur on a fast timescale, while those of phytoplankton and zooplankton occur on an intermediate timescale, and the dynamics of detritus are slow. The fast dynamics evolve with respect to time \(t\), the intermediate dynamics with respect to the timescale \(t_1\), where \(t_1 = \varepsilon_1 t\), and the slow dynamics with respect to the timescale \(t_2\), where \(t_2 = \varepsilon_2 t\), such that \(0 < \varepsilon_2 \ll \varepsilon_1 < 1\). 
Therefore, with these transformations, we can rewrite the system \eqref{Eq:three_timescale_model} in terms of the intermediate and slow systems, respectively, as follows
\begin{equation}\label{Eq:intermediate_system}
\begin{aligned}
   \varepsilon_1  \frac{\mathrm{d} m}{\mathrm{d} t_1} &= \zeta\max\{ \left(\theta-m\right),0\}(1 + e_1w) - \sigma m ,\\
     \frac{\mathrm{d} u}{\mathrm{d} t_1} &= \left(\rho_1 u\left( 1 + m \right)\left(1-u\right) - \frac{ \tau uv}{\kappa + u}  - \mu_1 u\right),\\
     \frac{\mathrm{d} v}{\mathrm{d} t_1} &= \left(\rho_2  \frac{\tau uv}{\kappa + u}  - \left(\mu_2 + \eta m\right) v\right) ,\\
  \varepsilon_1  \frac{\mathrm{d}w}{\mathrm{d} t_1} & =  \varepsilon_2\left( \mu_1 u + \left(\mu_2 + \eta m\right) v - \delta w\right),           
\end{aligned}
\end{equation}
and 

\begin{equation}\label{Eq:slow_model}
\begin{aligned}
   \varepsilon_2  \frac{\mathrm{d} m}{\mathrm{d} t_2} &= \zeta\max\{ \left(\theta-m\right),0\}(1 + e_1w) - \sigma m ,\\
     \varepsilon_2\frac{\mathrm{d} u}{\mathrm{d} t_2} &= \varepsilon_1\left(\rho_1 u\left( 1 + m \right)\left(1-u\right) - \frac{ \tau uv}{\kappa + u}  - \mu_1 u\right),\\
     \varepsilon_2\frac{\mathrm{d} v}{\mathrm{d} t_2} &= \varepsilon_1\left(\rho_2  \frac{\tau uv}{\kappa + u}  - \left(\mu_2 + \eta m\right) v\right) ,\\
   \frac{\mathrm{d}w}{\mathrm{d} t_2} & =  \left( \mu_1 u + \left(\mu_2 + \eta m\right) v - \delta w\right).         
\end{aligned}
\end{equation}
The dynamics of systems \eqref{Eq:three_timescale_model}, \eqref{Eq:intermediate_system}, and \eqref{Eq:slow_model} are equivalent for \(\varepsilon_1, \varepsilon_2 \neq 0\). However, as \(\varepsilon_1\), \(\varepsilon_2\), or both approach zero, the full system reduces to specific subsystems. Thus, analyzing these subsystems is crucial for understanding the full system's dynamics.

We also give the non-dimensionalization form of the system (1) with (3) as follows:

\begin{equation}\label{Eq:terres_nondimension_model}
\begin{aligned}
     \frac{\mathrm{d} m}{\mathrm{d} t} &= \zeta (1 + e_1w) - \sigma m,\\
     \frac{\mathrm{d} u}{\mathrm{d} t} &= \rho_1 u\left( 1 + m \right)\left(1-u\right) - \tau \frac{ K }{\kappa + u}uv  - \mu_1 u,\\
     \frac{\mathrm{d} v}{\mathrm{d} t} &= \rho_2 \tau \frac{K}{\kappa + u} uv - \left(\mu_2 + \eta m\right) v,\\
    \frac{\mathrm{d}w}{\mathrm{d} t} & =  \varepsilon_2\left( \mu_1 u + \left(\mu_2 + \eta m\right) v - \delta w\right),             
\end{aligned}
\end{equation}
with $\zeta =\frac{M_{\mathrm{in}}\gamma_M}{r}$ and other dimensionless parameters are same as in \eqref{eq:nondimen_parameter}.

\section{Figures} \label{app:extra_figure}

Here we show a one-parameter bifurcation diagram of the system (1)-\eqref{Eq:aquatic_eq} over varying $T$ and $M_{\mathrm{in}}.$ The result shows that with increasing temperature the system encounters multiple transitions in the dynamics. For instance, when $M_{\mathrm{in}}$ is low, the system (1) with \eqref{Eq:aquatic_eq} exhibits two equilibrium points: the unstable extinction state \(E_0(1.08, 0, 0, 0)\) and the stable zooplankton-free equilibrium \(E_1(1.08, 0.48, 0, 0.19)\) (cf. \ref{Appendix:Stability}), for temperatures below the optimal range for plankton growth.  As temperature increases, the system undergoes a transcritical bifurcation at \(T_{\mathrm{TC1}} = 17\), giving rise to a stable coexistence equilibrium \(E_*(1.08, 0.14, 0.01, 0.06)\). A further temperature increase triggers a Hopf bifurcation at \(T_{\mathrm{H1}} = 18\), where \(E_*\) loses stability and oscillatory coexistence emerges. Further, the amplitude of oscillations of daphnia grows until its optimal temperature, beyond which it gradually decreases and encounters a second Hopf bifurcation at \(T_{\mathrm{H2}} = 22.7\), where the system stabilizes at \(E_*\). Finally, at \(T_{\mathrm{TC2}} = 23\), the system undergoes another transcritical bifurcation, reverting to \(E_1\) as the only stable equilibrium. To summarize, at low methane concentrations, plankton species coexist periodically near the optimal temperature for daphnia (\(T_Y\)). Daphnia populations peak at \(T_Y\) but decline as temperatures rise, leading to increased algae density due to reduced grazing pressure. 

   At higher methane input, algae dominate at lower temperatures (\(T < T_{\mathrm{TC1}} = 19\)), with significantly higher densities compared to the low methane case. In the narrow range \(T_{\mathrm{TC1}} < T < T_{\mathrm{H1}} = 19.4\), stable coexistence is observed. Beyond \(T_{\mathrm{H1}}\), oscillatory coexistence appears but within a narrower temperature range compared to the low methane scenario. For \(T_{\mathrm{H2}} < T < T_{\mathrm{TC2}} = 20.85\), daphnia density declines sharply, and the system settles into a zooplankton-free state. This highlights the increased sensitivity of daphnia to rising temperatures under moderate methane levels (Fig.~\ref{fig:Temperature_bifurcation}(b)). Thus our results imply that higher methane concentrations narrow the temperature range for coexistence and fasten the daphnia extinction. This indicates that prolonged exposure to moderate methane levels increases daphnia sensitivity to rising temperatures.
\begin{figure}[ht!]
    \centering
    \subfigure[Low concentration of methane]{\includegraphics[width=0.3\textwidth]{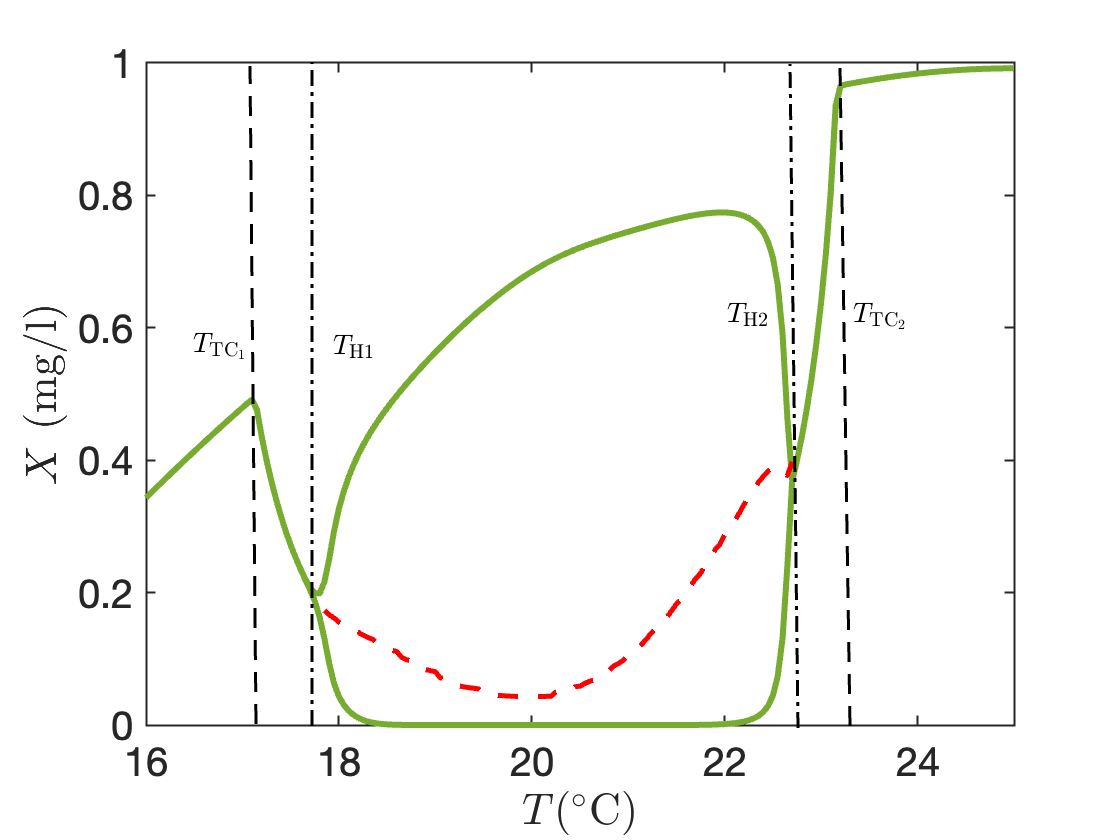}
     \includegraphics[width=0.3\textwidth]{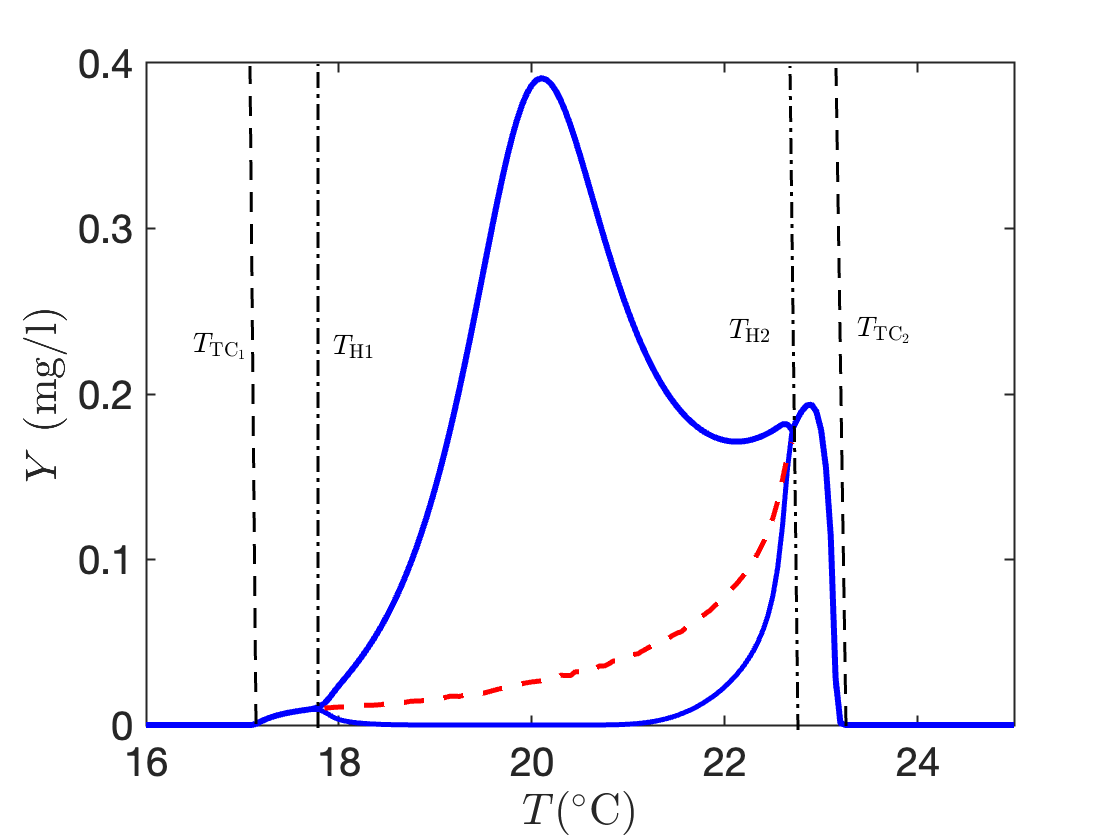}
     \includegraphics[width=0.3\textwidth]{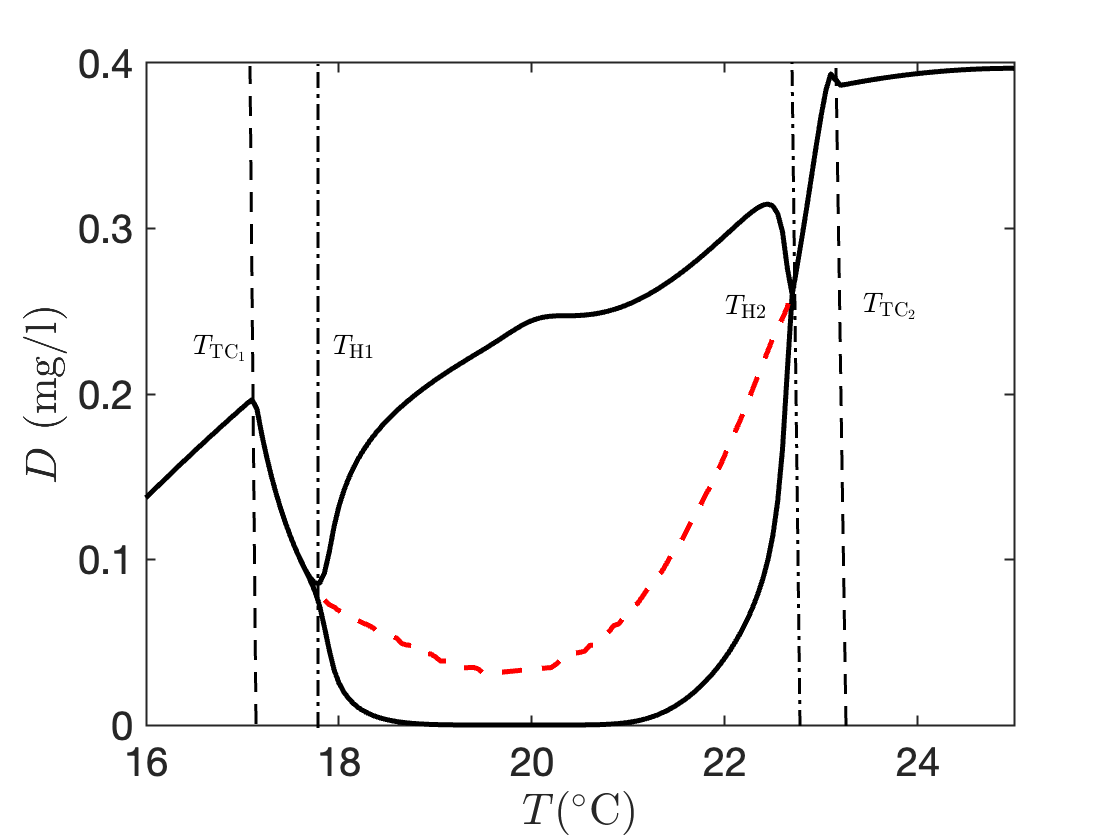}
     }

      \subfigure[Moderate concentration of methane]{\includegraphics[width=0.3\textwidth]{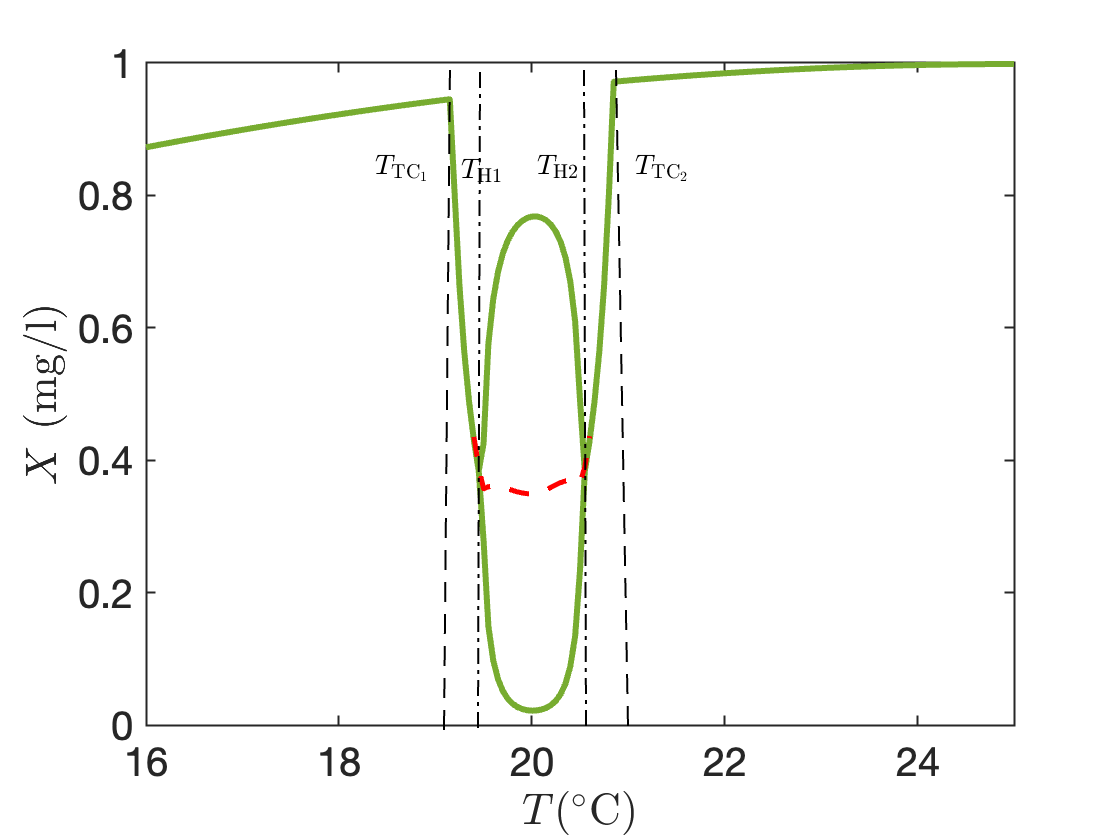}
     \includegraphics[width=0.3\textwidth]{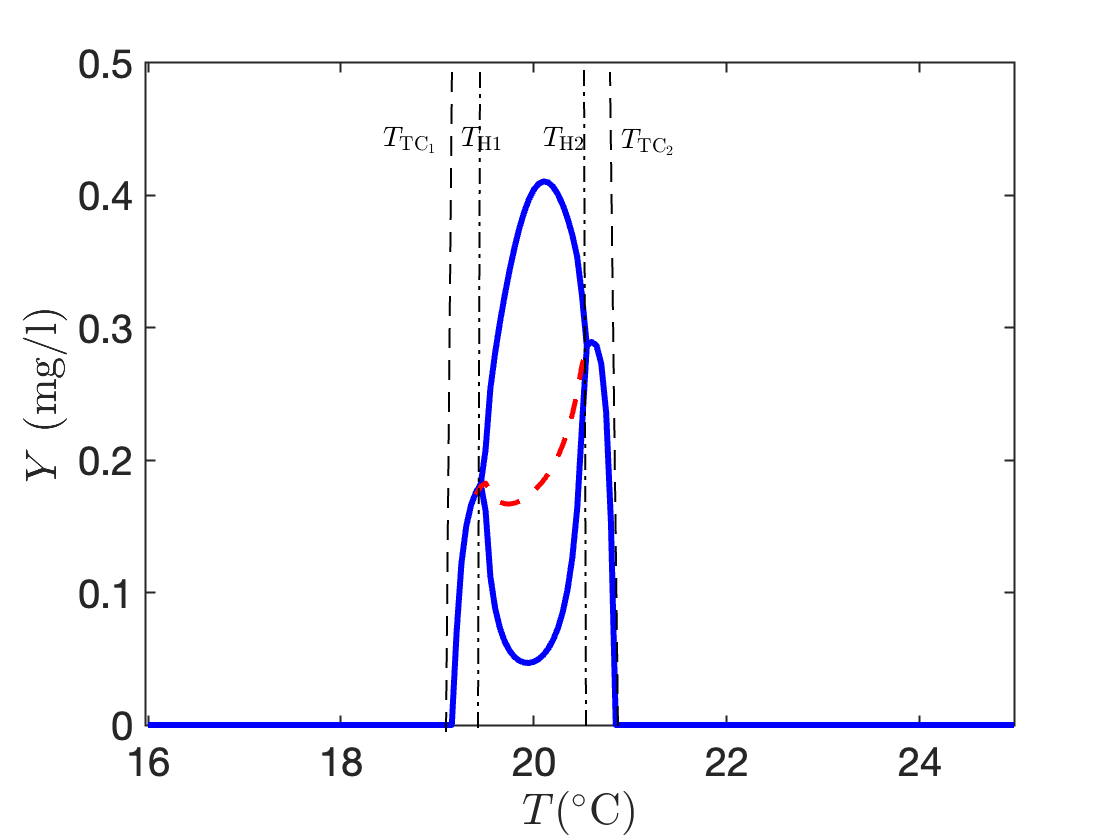}
     \includegraphics[width=0.3\textwidth]{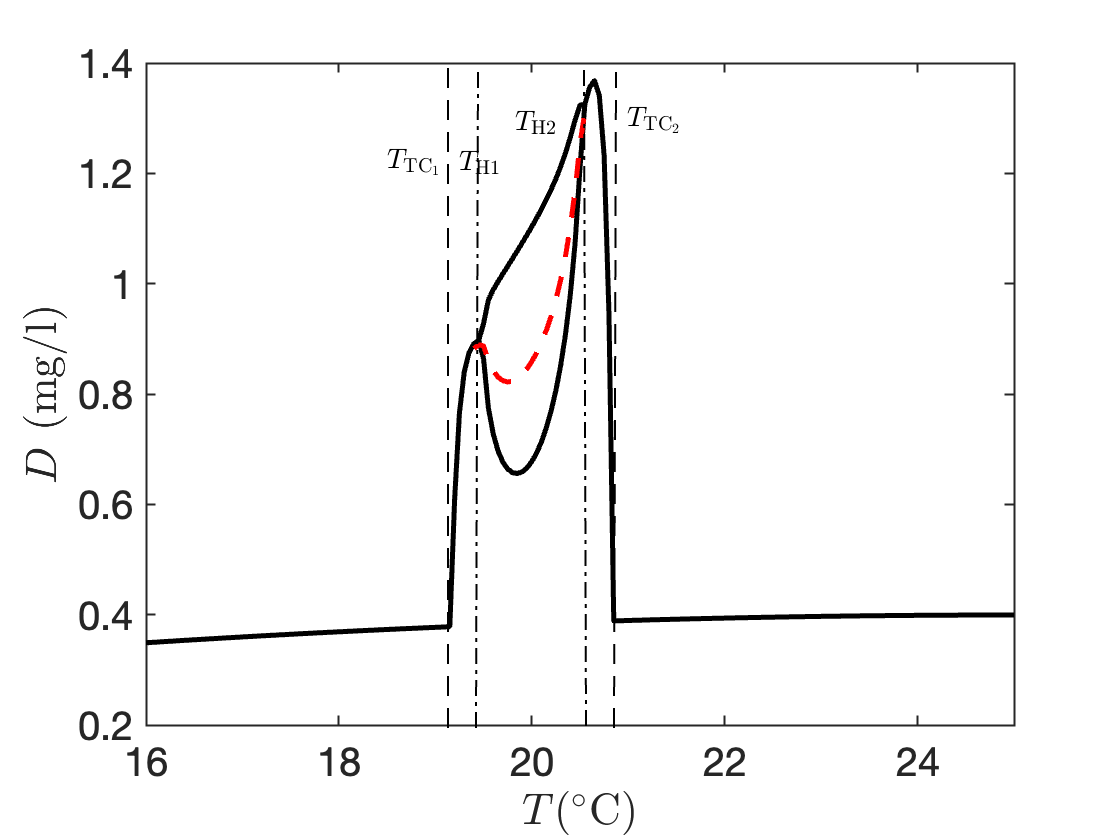}
     }   
    \caption{Bifurcation diagram of algae (green), daphnia (blue), and detritus (black) over changing temperature $T;$ (a) The external input of methane $M_{\mathrm{in}}=1,$ (b) The external input of methane $M_{\mathrm{in}}=15$. The other parameter values are fixed at \eqref{par:parameter_temp}. }
    \label{fig:Temperature_bifurcation}
\end{figure}

\end{appendix}

\end{document}